\theoremstyle{plain}
\newtheorem{theorem}{Theorem}[section]
\newtheorem{lemma}{Lemma}[section]
\theoremstyle{definition}
\newtheorem{definition}[theorem]{Definition}
\theoremstyle{remark}
\title{Optimal Rates and Efficient Algorithms for Online Bayesian Persuasion}
\author{
	Martino Bernasconi\\
	Politecnico di Milano\\
	\texttt{martino.bernasconideluca@polimi.it}
	\And
	Matteo Castiglioni\\
	Politecnico di Milano\\
	\texttt{matteo.castiglioni@polimi.it}
	\And
	Andrea Celli\\
	Bocconi University \\
	\texttt{andrea.celli2@unibocconi.it}
	\And
	Alberto Marchesi\\
	Politecnico di Milano\\
	\texttt{alberto.marchesi@polimi.it}
	\And
	Nicola Gatti\\
	Politecnico di Milano\\
	\texttt{nicola.gatti@polimi.it}
	\And
	Francesco Trovò\\
	Politecnico di Milano\\
	\texttt{francesco1.trovo@polimi.it}
}
\begin{document}

\maketitle

\begin{abstract}
	Bayesian persuasion studies how an informed sender should influence beliefs of rational receivers who take decisions through Bayesian updating of a common prior. We focus on the \emph{online Bayesian persuasion} framework, in which the sender repeatedly faces one or more receivers with unknown and adversarially selected types.
	First, we show how to obtain a tight $\tilde O(T^{1/2})$ regret bound in the case in which the sender faces a single receiver and has partial feedback, improving over the best previously known bound of $\tilde O(T^{4/5})$. Then, we provide the first no-regret guarantees for the multi-receiver setting under partial feedback.
	Finally, we show how to design no-regret algorithms with polynomial per-iteration running time by exploiting \emph{type reporting}, thereby circumventing known intractability results on online Bayesian persuasion. We provide efficient algorithms guaranteeing a $ O(T^{1/2})$ regret upper bound  both in the single- and multi-receiver scenario when type reporting is allowed.
\end{abstract}


\section{Introduction}\label{sec:intro}

The \emph{Bayesian persuasion} framework, introduced by \citet{kamenica2011bayesian}, is an economic model which helps to explain how individuals make decisions based on the information they receive from others, and how this information can be used to influence their behavior. This model is particularly useful for understanding strategic interactions in situations where individuals have different levels of information or expertise. The framework already found application in domains such as advertising~\citep{badanidiyuru2018targeting,emek2014signaling,bro2012send,Castiglioni2022Posted,bacchio2022}, voting~\citep{alonso2016persuading,castiglioni2019persuading,cheng2015mixture,CastiglioniPersuading2021}, routing~\citep{bhaskar2016hardness,vasserman2015implementing,castiglioni2021signaling}, security~\citep{rabinovich2015information,xu2016signaling}, and in incentivized exploration in multi-armed bandits \citep{kremer2014implementing,cohen2019optimal,mansour2016bayesian,sellke2021price,mansour2022bayesian}.

In the simplest instantiation of the model, there are a sender and a receiver with a common prior over a finite set of states of nature. The sender publicly commits to a \emph{signaling scheme}, which is a randomized mapping from states of nature to signals being sent to the receiver. Then, the sender observes the realized state of nature, and they send a signal to the receiver following the signaling scheme. The receiver observes the signal, computes their posterior distribution over states, and selects an action maximizing their expected utility. The sender and the receiver obtain a payoff which is a function of the receiver's action, and of the realized state of nature. An optimal signaling scheme for the sender is one maximizing their expected utility.

The study of Bayesian persuasion from a computational perspective was initiated by \citet{dughmi2016algorithmic}, and the original model was later extended to more complex settings such as games with multiple receivers (see, \eg \citep{dughmi2017algorithmicExternalities,bhaskar2016hardness,xu2020tractability}). 
A key question that has emerged is whether computational techniques can be used to ease some of the assumptions made in the original model by \citet{kamenica2011bayesian}. Two main lines of research have emerged from this question: one is aimed at developing robust algorithms that can bypass the common-prior assumption \citep{camara2020mechanisms,zu2021learning,bernasconisequential}, and the other is focused on the robustness of persuasion when the sender is unaware of the receiver's goals \citep{castiglioni2020online,babichenko2021regret}.

This work follows the second perspective, and studies the \emph{online Bayesian persuasion} framework introduced by \citet{castiglioni2020online}. In this framework, the sender repeatedly faces a receiver whose type is unknown and chosen adversarially at each round from a finite set of possible types. This framework encompasses the problem of learning in repeated Stackelberg games \citep{letchford2009learning,blum2014learning,marecki2012playing,balcan2015commitment}.

\paragraph{Contributions}
We start by describing a general no-regret algorithm for online-learning against an oblivious adversary with a \emph{finite} number of possible loss functions. We use this algorithm to provide a tight $\tilde O(T^{1/2})$ regret upper bound in the setting with one receiver and partial feedback, improving over the $\tilde O(T^{4/5})$ rate by \citet{castiglioni2020online}. This result also improves the best known bound of $\tilde O(T^{2/3})$ for online-learning in repeated Stackelberg games by \citet{balcan2015commitment}. Then, we show that our general framework can be applied to obtain the first no-regret guarantees in the multi-receiver setting by \citet{castiglioni2021multi} under partial feedback. In particular, we provide a tight $\tilde O(T^{1/2})$ regret bound under the assumption the set of possible type profiles of the receivers is known beforehand by the sender.
In each of these settings, our no-regret algorithms may suffer from exponential per-iteration running time, as expected from known hardness results for the online Bayesian persuasion settings \cite{castiglioni2020online}. In the last part of the paper, 
we provide the first no-regret algorithms for online Bayesian persuasion with guaranteed polynomial per-iteration running time. We do that by considering the \emph{type reporting} framework by~\citet{DBLP:conf/atal/CastiglioniM022}, where the sender can commit to a \emph{menu} of signaling schemes, and then let the receivers choose their preferred signaling scheme depending on their private types. In such a setting, we provide a $O(T^{1/2})$ regret upper bound for the single-receiver setting. Moreover, by designing a general algorithm based on FTRL, we shot that it is possible to achieve the same rate of convergence with polynomial-time per-iteration time complexity also in the multi-receiver setting, when receivers have binary actions and the utility of the sender is specified by a supermodular or anonymous function.

\section{Preliminaries}\label{sec:preliminaries}

Vectors are denoted by bold symbols. Given a vector $\vx$, we let $x_i$ be its $i$-th component. The set $\{1,2,\ldots,n\}$ of the first $n$ natural numbers is compactly denoted as $[n]$. Moreover, given a discrete set $\cX$, we denote by $\Delta_{\cX}$ the $|\cX|$-simplex, while, given a set $\cY$, $\textnormal{int}(\cY)$ is the {\em interior} of $\cY$.

In the following, we formally describe the \emph{online Bayesian persuasion} (OBP) framework originally introduced by~\citet{castiglioni2021multi}.
Such a framework models a repeated interaction between a \emph{sender} and multiple \emph{receivers}.

We denote by $\cR \defeq [n]$ a finite set of $n$ receivers. Each receiver $r\in\cR$ has a finite set $\cK_r$ of $m$ different types,
and a finite set $\cA_r$ of available \emph{actions}.
We let $\cK \defeq \bigtimes_{r \in \cR} \cK_r$ be the set of \emph{type profiles}, \ie vectors $\vk \in \cK$ defining a type $k_r \in \cK_r$ for each receiver $r \in \cR$.
%
Similarly, we let $\cA \defeq  A^n$ be the set of \emph{action profiles} $\va \in \A$ specifying an action $a_r \in A$ for each receiver $r \in \cR$.\footnote{We assume that all the receivers have the same action set $A$. This comes w.l.o.g. as it is always possible to add fictitious actions to the receivers whenever the assumptions does \emph{not} hold.}

The payoffs of both the sender and the receivers depend on a random \emph{state of nature}, which is drawn from a finite set $\Theta$ of $d$ possible states according to a commonly-known \emph{prior} probability distribution $\vmu  \in\textnormal{int}(\Delta_\Theta)$.
The sender's payoffs also depend on the actions selected by the receivers, as defined by the function $\us: \cA \times \Theta \to [0,1]$.
Moreover, as it is customary in the literature (see, \eg \citep{dughmi2017algorithmicExternalities}), we assume that there are \emph{no inter-agent externalities}, which means that the payoffs of a receiver only depend on the action played by them, and \emph{not} on those played by other receivers.
Formally, a receiver $r \in \cR$ of type $k \in \cK_r$ is characterized by a payoff function $\ur[r][k]: \cA_r \times\Theta\to [0,1]$.


As in the classical Bayesian persuasion framework by~\citet{kamenica2011bayesian}, the sender gets to know the realized state of nature $\theta \sim \vmu$, and they have the ability to strategically disclose (part of) such information to the receivers, in order to maximize their own utility.
This is achieved by committing beforehand to a \emph{signaling scheme}, which is a randomized mapping from states of nature to signals being sent to the receivers. Formally, let $\cS \defeq \bigtimes_{r \in \cR} \cS_r$ be the finite set of \emph{signal profiles}, \ie the set of vectors $\vs \in \cS$ defining a signal $s_r \in \cS_r$ for each receiver $r \in \cR$.\footnote{In this work, we focus on \emph{private} signaling, where the sender has the ability to privately communicate a signal to each receiver.} 
Then, a signaling scheme is a mapping $\phi : \Theta \to \Delta_\cS$. 
We denote by $\phi_\theta(\vs)$ the probability of sending the signals in $\vs \in \cS$ when the state of nature is $\theta \in \Theta$.
%
%
Moreover, given a signaling scheme $\phi$, we define the resulting \emph{marginal} signaling scheme for a receiver $r \in \cR$ as $\phi^r: \Theta \to \Delta_{\cS_r}$.
Formally, for every $\theta \in \Theta$, the marginal signaling scheme $\phi^r$ defines the distribution over receiver $r$'s signals that is induced by $\phi$, which assigns probability 
\begin{equation}\label{eq:marginal}
	\phi^r_\theta (s') \defeq \sum_{\vs \in \cS: s_r = s'} \phi_\theta(\vs) \,\,\text{\normalfont to each } s' \in \cS_r.
\end{equation}
%

%
The repeated interaction between the sender and the receivers goes on as follows.
At each round $t\in [T]$, the sender commits to a signaling scheme $\phi_t$ (\ie $\phi_t$ is publicly known), and, subsequently, they observe the realized state of nature $\theta \sim \vmu$.
Then, the sender draws a signal profile $\vs\sim\phi_{t,\theta}$ and communicates to each receiver $r \in \cR$ (whose type is unknown to the sender) their own private signal $s_r$.
After observing the signal, each receiver $r \in \cR$ updates their prior belief $\vmu$ according to Bayes rule, and, then, they select an action maximizing their expected utility.

The \emph{posterior} $\vxi^{s_r} \in \Delta_\Theta$ computed by a receiver $r \in \cR$ after observing a signal $s_r \in \cS_r$ under signaling scheme $\phi$ is a probability distribution over states such that 
\[
	\xi_{\theta}^{s_r} \defeq \frac{\mu_\theta\, \phi_\theta^r(s_r)}{ \sum_{\theta'\in\Theta} \mu_{\theta^\prime}\phi_{\theta^\prime}^r(s_r) }\hspace{.5cm}\text{\normalfont for every $\theta\in\Theta$}.
	\footnote{For ease of notation, we omit the dependence of the posterior distribution $\vxi^{s_r}$ from the signaling scheme $\phi$ and the receiver $r$, as these will be clear from context.}
\]
%
%
%
Given a posterior $\vxi\in \Delta_\Theta$, the set of \emph{best-response actions} of a receiver $r \in \cR$ of type $k \in\cK_r$ is defined as follows: 
%
\[
	\cB^{r,k}_{\vxi}\defeq \argmax_{a\in\cA_r} \sum_{\theta \in \Theta} \xi_{\theta} \, \ur[r][k](a,\theta).
\]
%
%
%
Moreover, assuming receivers break ties in favor of the sender, the sender's expected utility for selecting a signaling scheme $\phi$ given a receivers' type profile $\vk \in \cK$ is
\[
\us(\phi,\vk) \defeq \sum_{\vs\in\cS} \mleft(\argmax_{\va \in \bigtimes\limits_{r \in \cR} \cB_{\vxi^{s_r}}^{r,k_r}}  \sum_{\theta \in \Theta} \mu_\theta \phi_\theta(\vs) \us(\va, \theta)\mright).
\]
%

We focus on the problem of computing a sequence $\left\{ \phi_t \right\}_{t \in [T]}$ of signaling schemes which can be employed by the sender so as to maximize their utility.
We assume that the sequence of receivers' type profiles $\left\{ \vk_t \right\}_{t \in [T]}$, with $\vk_t \in \cK$, is selected by an oblivious adversary.
At each round $t \in [T]$ of the repeated interaction, the sender gets a payoff $\us(\phi_t, \vk_t)$ and receives some feedbacks about receivers' types.
In the \emph{full feedback} setting, the sender gets to know the receivers' type profile $\vk_t$, while in the \emph{partial feedback} setting the sender only observes the action profile $\va_t \in \cA$ played by the receivers at round $t$.
We measure the performance of the sender by using the regret up to round $T$ with respect to the best fixed signaling scheme in hindsight:
\begin{equation*}
R_T\defeq \max_{\phi} \sum_{t=1}^T \us(\phi,\vk_t)- \sum_{t=1}^T \E\mleft[\us(\phi_t,\vk_t) \mright] ,
\end{equation*}
where the expectation is on the possible randomness of the algorithm.\footnote{This notion of regret is also known as \emph{Stackelberg regret} \citep{balcan2015commitment,chen2020learning}.}
Ideally, we would like an algorithm that generates a sequence $\{\phi_t\}_{t \in [T]}$ with the following properties: (i) the regret is polynomial in the size of the problem instance, \ie it is $\poly(n, m, d, |A|)$, and goes to zero as $T\to\infty$; and (ii) the per-round running time is $\poly(t, n, m, d, |A|)$.
%
%

\section{Online Learning Against Adversaries with a Finite Number of Losses}\label{sec:smallLosses}


We start by introducing a general framework that will be crucial in proving some of our main results in the rest of the paper.
In particular, we propose a no-regret algorithm for a general online learning problem in which the agent's decisions are only evaluated in terms of $D$ possible, adversarially-selected loss functions.
The algorithm that we propose attains a $\tilde O(\sqrt{T})$ regret bound, which is independent of the size of the decision space of the agent and only depends polynomially on the number of possible losses $D$.
%
%

In the online learning problem that we consider in this section, at each round $t \in [T]$, an agent takes a decision $\vx_t$ from a set $\cX \subseteq \mathbb{R}^M$, and, then, an adversary selects an element $d_t$ from a finite set $\cD$ of $D \coloneqq |\cD|$ elements.
Then, the loss suffered by the agent is $L_{d_t}(\vx_t)$, where functions $L_{d} : \cX \to [0,1]$ are loss functions indexed by the elements $d \in \cD$.
%
%
Thus, the performance of the agent over the $T$ rounds is evaluated by means of the regret $R_T \coloneqq \sum_{t=1}^T\Exp [L_{d_t}(\vx_t)]-\min_{\vx \in\cX}\sum_{t=1}^T L_{d_t}(\vx)$, where the expectation is with respect to the (possible) randomization that the agent adopts in choosing $\vx_t$.

Next, we introduce a general no-regret algorithm that works by exploiting the linear structure of the online learning problem described above.
In order to do so, we introduce a vector-valued function $\vnu:\cX\to\R^D$ defined as $\vnu(\vx)\coloneqq [L_d(\vx)]_{d \in \cD}$ for all $\vx \in \cX$.
By observing that $L_d(\vx) = \vnu(\vx)^\top \onevec_{d}$, where $\onevec_{d} \in \left\{0,1 \right\}^D$ is a vector whose $d$-th component is the only one that is different from zero, we can cast the online learning problem as a new one with linear losses defined over the decision space $\vnu(\cX)$. 
Since $\vnu(\cX)$ may \emph{not} be convex, the algorithm employs a regret minimizer $\fR$ working on the convex hull $\co \vnu(\cX)$.\footnote{In order to see that taking the convex hull is necessary, let $\cX$ be the unit sphere in $\mathbb{R}^2$ and $L_d(\vx)=\|\vx\|_2^{2d}$ for $d \in \D=\{0.5,1\}$. Then, it is easy to verify that $\vnu(\cX)=\{(x,\sqrt{x}) : x\in[0,1]\}$, which is \emph{not} a convex set.}
This is possible since, instead of playing an $\zvec\in\co \vnu(\cX)$, the algorithm can replace it by a suitable randomization of $D+1$ points in $\vnu(\cX)$, which is guaranteed to exist by the Carathéodory's theorem. 
See Algorithm~\ref{alg:generalLinear_new} for the detailed procedure, where we denote by $\vnu^\dagger$ the inverse map of $\vnu$.
Notice that, provided that a suitable regret minimizer $\fR$ is instantiated, the algorithm works both in the full feedback setting, where the agent observes $d_t$, and in the bandit feedback one, in which they only observe $L_{d_t}(\vx_t)$.
%
%
%

\begin{algorithm}[!htp]\caption{\textsc{No-regret Algorithm}}\label{alg:generalLinear_new}
	\begin{algorithmic}[1]
		\REQUIRE Regret minimizer $\fR$ for the set $\co \vnu(\cX)$ and linear losses; Inverse mapping $\vnu^\dagger$
		\STATE Initialize regret minimizer $\fR$
		\FOR{$t=1, \ldots, T$}
		\STATE $\co \vnu(\cX) \ni \zvec_{t}\gets \fR.\textsc{Reccomend}() $
		\STATE $\left\{ \left(\zvec_t^i,\lambda^i_t\right) \right\}_{i\in[D+1]}\gets\textsc{Carath\'eodory}({\zvec}_t,\vnu(\cX))$
		\STATE Draw $j \in [D+1]$ with probabilities $\lambda_t^1, \ldots, \lambda_t^{D+1}$
		\STATE Play $\vx_t \gets \vnu^\dagger(\zvec_{t}^j)$
		\STATE {\setlength{\tabcolsep}{0pt}
			\newlength\q
			\setlength\q{\dimexpr .27\textwidth}\noindent\begin{tabular}{p{\q}r}
				Observe $d_t \in\D$ \hfil &\Comment{\textcolor{gray}{\textnormal{Full feedback}}} \\
				Observe $L_{d_t}(\vx_t)$ \hfil &\Comment{\textcolor{gray}{\textnormal{Bandit feedback}}}
		\end{tabular}}
		\STATE {\setlength{\tabcolsep}{0pt}
			\setlength\q{\dimexpr .27\textwidth}\noindent\begin{tabular}{p{\q}r}
				$\fR.\textsc{ObserveLoss}(L_{d_t})$ \hfil &\Comment{\textcolor{gray}{\textnormal{Full feedback}}} \\
				$\fR.\textsc{ObserveLoss}(L_{d_t}(\vx_t))$ \hfil &\Comment{\textcolor{gray}{\textnormal{Bandit feedback}}}
		\end{tabular}}
		\ENDFOR
	\end{algorithmic}
\end{algorithm}
%

The following theorem bounds the regret of Algorithm~\ref{alg:generalLinear_new}:
%
\begin{restatable}{theorem}{generalLinear}\label{th:generalLinear}
	%
	\Cref{alg:generalLinear_new} guarantees a cumulative regret $R_T \leq R_{T}^\fR(\co \vnu(\cX))$, where $R_{T}^\fR(\co \vnu(\cX))$ is the regret bound of a suitable regret minimizer $\fR$ for the set $\co \vnu(\cX)$.
\end{restatable}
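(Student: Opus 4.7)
}
The plan is to show that Algorithm~\ref{alg:generalLinear_new} is, in expectation, a faithful simulation of the regret minimizer $\fR$ operating on the linearized decision space $\co\,\vnu(\cX)$ with losses $\onevec_{d_t}$, so that the claimed bound follows by invoking the guarantee of $\fR$ directly.

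First, I would rewrite the instantaneous loss in linear form. By definition of $\vnu$, for every $\vx\in\cX$ one has $L_d(\vx)=\vnu(\vx)^\top\onevec_d$. Since at round $t$ the algorithm plays $\vx_t=\vnu^\dagger(\zvec_t^j)$ where $j\in[D+1]$ is sampled with probability $\lambda_t^j$, we have $\vnu(\vx_t)=\zvec_t^j$. Using that the Carathéodory decomposition returned by the procedure satisfies $\sum_{i=1}^{D+1}\lambda_t^i\,\zvec_t^i=\zvec_t$, I would conclude
\[
\E\!\left[L_{d_t}(\vx_t)\,\middle|\,\zvec_t\right]
=\sum_{i=1}^{D+1}\lambda_t^i\,(\zvec_t^i)^\top\onevec_{d_t}
=\zvec_t^\top\onevec_{d_t}.
\]
Summing over $t$ and taking total expectation yields $\sum_{t=1}^T\E[L_{d_t}(\vx_t)]=\E\bigl[\sum_{t=1}^T\zvec_t^\top\onevec_{d_t}\bigr]$, which is exactly the expected cumulative loss suffered by $\fR$ on the sequence of linear losses $\onevec_{d_t}$.

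Second, I would bound the comparator. For any $\vx\in\cX$, the vector $\vnu(\vx)$ lies in $\vnu(\cX)\subseteq \co\,\vnu(\cX)$, hence
\[
\min_{\vx\in\cX}\sum_{t=1}^T L_{d_t}(\vx)
=\min_{\vx\in\cX}\sum_{t=1}^T \vnu(\vx)^\top\onevec_{d_t}
\geq \min_{\zvec\in\co\,\vnu(\cX)}\sum_{t=1}^T \zvec^\top\onevec_{d_t}.
\]
Combining the two displays gives
\[
R_T\leq \E\!\left[\sum_{t=1}^T\zvec_t^\top\onevec_{d_t}-\min_{\zvec\in\co\,\vnu(\cX)}\sum_{t=1}^T\zvec^\top\onevec_{d_t}\right]\leq R_T^{\fR}(\co\,\vnu(\cX)),
\]
where the last inequality is the defining regret guarantee of $\fR$ on the domain $\co\,\vnu(\cX)$ with linear losses bounded in $[0,1]^D$.

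The only nontrivial step is ensuring that the loss signal passed to $\fR$ in the bandit branch is consistent with the linear-loss regret guarantee. In the full-feedback branch $\fR$ sees the whole vector $\onevec_{d_t}$, so the linear-loss assumption is met verbatim; in the bandit branch, the algorithm feeds $L_{d_t}(\vx_t)=(\zvec_t^j)^\top\onevec_{d_t}$, which is an unbiased estimator of the linear loss $\zvec_t^\top\onevec_{d_t}$ incurred at the \emph{recommended} point. This matches the standard stochastic-evaluation oracle required by bandit linear regret minimizers over $\co\,\vnu(\cX)$, so we can indeed invoke $R_T^{\fR}(\co\,\vnu(\cX))$ in either feedback model, concluding the argument.
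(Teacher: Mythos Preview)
Your proposal is correct and follows essentially the same approach as the paper: linearize the loss via $L_d(\vx)=\vnu(\vx)^\top\onevec_d$, use the Carath\'eodory sampling to show $\E[L_{d_t}(\vx_t)\mid\zvec_t]=\zvec_t^\top\onevec_{d_t}$, relax the comparator from $\vnu(\cX)$ to $\co\,\vnu(\cX)$, and invoke the regret guarantee of $\fR$. Your handling of expectations is in fact a bit more careful than the paper's, and your closing remark on the bandit branch anticipates exactly the point the paper makes later in the proof of Corollary~\ref{cor:PartialFeedback}.
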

In order to run \cref{alg:generalLinear_new}, one has to implement the $\textsc{Carath\'eodory}$ oracle and the inverse map $\vnu^\dagger$.
The following result shows that these admit efficient implementations in ``linear problems''.
These include as special cases many interesting settings, such as most online Bayesian persuasion problems studied in this paper.
\begin{restatable}{theorem}{LinearBanditoverPolytope}\label{th:LinearBanditoverPolytope}
	If $\cX$ is a polynomially-sized polytope and $\vnu$ is a linear map, i.e., there exists $\mM\in\R^{D\times M}$ such that $\nu(\xvec)=\mM\xvec$ for all $\xvec\in\cX$, then the $\textsc{Carath\'eodory}$ oracle and the inverse map $\vnu^\dagger$ can be implemented efficiently.
\end{restatable}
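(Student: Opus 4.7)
The plan is to exploit linearity of $\vnu$ together with the polynomial-size description of $\cX$, reducing both oracles to a short sequence of linear-programming and Gaussian-elimination calls.

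I would first record a structural observation: since $\vnu(\xvec)=\mM\xvec$ is linear and $\cX$ is a polytope, the image $\vnu(\cX)=\mM\cX$ is itself a polytope, hence convex. Therefore $\co\,\vnu(\cX)=\vnu(\cX)$, so both oracles act on the same set and one may freely pass between $\cX$ and $\vnu(\cX)$ through $\mM$.

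For the inverse map $\vnu^\dagger$, given $\zvec\in\vnu(\cX)$, the plan is to solve the LP feasibility problem $\{\xvec\in\cX:\mM\xvec=\zvec\}$; since $\cX$ admits a polynomial-size description, this is polynomial-time and returns a valid preimage. For the Carath\'eodory oracle, my plan is a two-stage reduction. In \emph{Stage~1 (preimage decomposition)}, I recover $\xvec_0\in\cX$ via $\vnu^\dagger$ and decompose it as a convex combination of at most $M+1$ vertices of $\cX$ by the standard iterative LP-based vertex-decomposition procedure: repeatedly pick a direction, solve an LP on $\cX$ to obtain an extreme point $v^i$, follow the ray from $v^i$ through the residual point to a facet, and recurse on the lower-dimensional face. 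In \emph{Stage~2 (dimensional reduction)}, I map the decomposition through $\mM$ to get $\zvec=\sum_i\lambda^i(\mM v^i)\in\vnu(\cX)\subseteq\R^D$; if more than $D+1$ terms remain, I apply the textbook Carath\'eodory reduction, solving for $\alpha^i$ with $\sum_i\alpha^i=0$ and $\sum_i\alpha^i(\mM v^i)=\mathbf{0}$, and updating $\lambda^i\leftarrow\lambda^i+t\alpha^i$ with $t$ chosen to zero out one coefficient while keeping all others nonnegative; iterating this brings the support down to at most $D+1$ points.

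The main obstacle I expect is Stage~1: one must verify that the iterative vertex-decomposition terminates in polynomially many steps regardless of whether $\cX$ is presented by its vertices or by its halfspaces. This is classical --- each iteration strictly reduces the dimension of the face containing the residual point, so at most $M+1$ iterations suffice --- but arguing that the facet met along each ray is identifiable in polynomial time from the given description of $\cX$ is the only non-routine piece. Everything else --- the LP feasibility call underlying $\vnu^\dagger$, the matrix multiplication by $\mM$, and the Carath\'eodory reduction via Gaussian elimination on a system of size $O(D)$ --- is standard polynomial-time linear algebra, and composing the estimates yields the claim.
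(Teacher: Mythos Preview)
Your proposal is correct but misses a shortcut that the paper exploits. You correctly record that $\co\,\vnu(\cX)=\vnu(\cX)$ because the linear image of a polytope is a polytope, hence already convex --- but then you do not use this to its full extent. The paper's point is that once $\co\,\vnu(\cX)=\vnu(\cX)$, \emph{every} $\zvec$ handed to the \textsc{Carath\'eodory} oracle is already an element of $\vnu(\cX)$, so the oracle can simply return the one-point distribution $\{(\zvec,1)\}$; no vertex decomposition of $\cX$ and no subsequent Carath\'eodory reduction in $\R^D$ are needed at all. Your two-stage procedure works, but it is doing strictly more than required. Conversely, for $\vnu^\dagger$ your LP-feasibility call $\{\xvec\in\cX:\mM\xvec=\zvec\}$ is arguably cleaner than the paper's choice of a generalized inverse $\mM^\dagger$, since the latter does not self-evidently guarantee $\mM^\dagger\zvec\in\cX$ whereas your feasibility program does by construction.
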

%
%
%

Moreover, in the case of ``linear problems'' as in Theorem~\ref{th:LinearBanditoverPolytope}, we can instantiate Algorithm~\ref{alg:generalLinear_new} with specific regret minimizes $\mathfrak{R}$ for both the full and bandit feedback, so as to obtain the following guarantees.

\begin{restatable}{corollary}{FullFeedback}\label{cor:FullFeedback}
	Under the assumptions of Theorem~\ref{th:LinearBanditoverPolytope}, with full feedback, there exists a regret minimizer $\fR$ such that Algorithm~\ref{alg:generalLinear_new} is efficient and guarantees cumulative regret
	\[
	R_T\le \sqrt{DT}.
	\]
\end{restatable}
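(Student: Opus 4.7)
The plan is to instantiate \Cref{alg:generalLinear_new} with projected Online Gradient Descent (OGD) as the internal regret minimizer $\fR$, and then verify both the claimed $\sqrt{DT}$ bound and the per-round polynomial running time. By \Cref{th:generalLinear}, the cumulative regret of the overall algorithm is controlled by the regret of $\fR$ on the set $\co\vnu(\cX)$ against linear losses, so it suffices to exhibit such an $\fR$ that is at the same time efficiently implementable.

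First I would collect the relevant geometric facts. Since every loss $L_d$ takes values in $[0,1]$, we have $\vnu(\cX)\subseteq [0,1]^D$ and hence $\co\vnu(\cX)\subseteq[0,1]^D$, whose $\ell_2$-diameter is at most $\sqrt{D}$. In the linear case $\vnu(\xvec)=\mM\xvec$ with $\cX$ a convex polytope, the image $\mM\cX$ is itself convex, so $\co\vnu(\cX)=\mM\cX$ is simply a (possibly implicit) polytope. Moreover, the linear loss passed to $\fR$ at round $t$ in the cast of Algorithm~\ref{alg:generalLinear_new} is the canonical basis vector $\onevec_{d_t}$, whose $\ell_2$-norm equals $1$ exactly.

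Choosing $\fR$ to be projected OGD with constant step size $\eta=\sqrt{D/T}$ tuned to balance the diameter and the gradient norm, the standard OGD analysis in Euclidean geometry yields
\[
R_T^\fR(\co\vnu(\cX)) \;\le\; \text{diam}(\co\vnu(\cX))\cdot\max_{t}\|\onevec_{d_t}\|_2\cdot\sqrt{T} \;\le\; \sqrt{DT},
\]
which is exactly the desired bound. For efficiency, each OGD update reduces to a single Euclidean projection onto $\co\vnu(\cX)=\mM\cX$, which I would rewrite as the convex quadratic program $\min_{\xvec\in\cX}\|\mM\xvec-\zvec_0\|_2^2$ over the polynomially-sized polytope $\cX$; this is solvable in polynomial time by standard convex QP methods. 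The remaining steps of \Cref{alg:generalLinear_new}, namely the Carath\'eodory decomposition and the inverse map $\vnu^\dagger$, are polynomial-time by \Cref{th:LinearBanditoverPolytope}. The main point to be careful about is precisely this reparametrization through $\cX$, which is needed to avoid working with an explicit facet description of $\mM\cX$ (in general exponentially large); beyond this routine check, no real obstacle is anticipated, since the OGD constants collapse cleanly to exactly $\sqrt{DT}$ under the above tuning.
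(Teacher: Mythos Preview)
Your proposal is correct and follows essentially the same approach as the paper: instantiate $\fR$ with projected OGD, use that $\co\vnu(\cX)\subseteq[0,1]^D$ has $\ell_2$-diameter at most $\sqrt{D}$ and that the loss gradients $\onevec_{d_t}$ have unit $\ell_2$-norm, and tune the step size to $\sqrt{D/T}$ to obtain $R_T\le\sqrt{DT}$. Your additional remark that the Euclidean projection onto $\mM\cX$ can be computed as the convex QP $\min_{\xvec\in\cX}\|\mM\xvec-\zvec_0\|_2^2$ over the polynomially-sized polytope $\cX$ is a useful elaboration on the efficiency claim that the paper's proof leaves implicit.
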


\begin{restatable}{corollary}{PartialFeedback}\label{cor:PartialFeedback}
	Under the assumptions of Theorem~\ref{th:LinearBanditoverPolytope}, with bandit feedback, there exists a regret minimizer $\mathfrak{R}$ such that Algorithm~\ref{alg:generalLinear_new} is efficient and guarantees cumulative regret
	\[
	R_T\le 16D^{3/2}\sqrt{T\log T}.
	\]
\end{restatable}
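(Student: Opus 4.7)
The plan is to combine Theorem~\ref{th:generalLinear} with a polynomial-time bandit linear optimization algorithm on $\co \vnu(\cX)$. First, I would observe that under the linear assumption $\vnu(\xvec) = \mM \xvec$, the image $\vnu(\cX) = \mM\,\cX$ is already convex, being the linear image of the polytope $\cX$, so $\co \vnu(\cX) = \vnu(\cX)$ is itself a polytope in $\R^D$. Since each $L_d$ takes values in $[0,1]$, we have $\vnu(\cX) \subseteq [0,1]^D$. Moreover, because $\cX$ admits a polynomial-size LP description, membership and linear optimization over $\vnu(\cX)$ can be carried out in polynomial time by a single LP on $\cX$ (find $\xvec \in \cX$ with $\mM\xvec$ equal to, or maximizing an inner product in, the target direction). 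In particular, a polynomial-time separation oracle for $\co \vnu(\cX)$ is available.

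Next, I would observe that at each round the scalar fed to $\fR$ is $L_{d_t}(\vx_t) = \zvec_t^\top \onevec_{d_t}$, i.e., the inner product of the decision $\zvec_t\in\co\vnu(\cX)$ with the loss vector $\onevec_{d_t}$, with $\|\onevec_{d_t}\|_2 = 1$ and losses contained in $[0,1]$. This is exactly the standard setup of bandit linear optimization over a convex body in $\R^D$. I would therefore instantiate $\fR$ with a polynomial-time bandit linear optimization algorithm for convex bodies with a separation oracle---for instance, online mirror descent on $\co \vnu(\cX)$ regularized by a self-concordant barrier and combined with John's-ellipsoid-based exploration, in the style of SCRiBLe and the Bubeck--Cesa-Bianchi--Kakade framework---which is known to guarantee a regret of order $D^{3/2}\sqrt{T\log T}$ on a convex body in $\R^D$ and to run in time polynomial in $D$ and in the bit-complexity of the LP description of the body.

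Plugging the resulting bound into Theorem~\ref{th:generalLinear} gives $R_T \le R_T^\fR(\co\vnu(\cX)) \le 16\,D^{3/2}\sqrt{T\log T}$, and efficiency of the overall procedure follows by combining the polynomial-time $\fR$ with the polynomial-time implementations of the $\textsc{Carath\'eodory}$ oracle and the inverse map $\vnu^\dagger$ granted by Theorem~\ref{th:LinearBanditoverPolytope}. The main obstacle I anticipate is the careful tracking of constants and dimension factors: while the $D^{3/2}\sqrt{T\log T}$ rate is classical, the exact leading constant depends on the chosen barrier, the mixing rate of the exploration distribution, and the scaling of the loss vectors, so one must select a specific algorithmic instantiation whose constant matches $16$ and verify that all barrier/ellipsoid computations on the specific polytope $\vnu(\cX)$---rather than just on a generic convex body---are actually polynomial in the input size of the problem.
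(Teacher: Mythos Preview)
Your proposal is correct and follows essentially the same approach as the paper: both instantiate $\fR$ with the self-concordant-barrier-based bandit linear optimization algorithm of Abernethy--Hazan--Rakhlin on the polytope $\co\vnu(\cX)\subseteq[0,1]^D$, invoke the existence of a $D$-self-concordant barrier for any convex body in $\R^D$ (Nesterov--Nemirovski), and read off the $16D^{3/2}\sqrt{T\log T}$ bound, with efficiency following from Theorem~\ref{th:LinearBanditoverPolytope}. One minor remark: the paper's proof additionally notes that $\fR$ must in general cope with observing only an unbiased estimate of $\zvec_t^\top\onevec_{d_t}$ because of the Carath\'eodory sampling step, whereas you (correctly) observe that under the linear assumption the Carath\'eodory decomposition is trivial and the feedback equals $\zvec_t^\top\onevec_{d_t}$ exactly---so this caveat is not actually needed here.
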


\section{Optimal Regret Bounds for Online Bayesian Persuasion with Partial Feedback}\label{sec:no reporting}

Next, we show that
our general online learning framework introduced in Section~\ref{sec:smallLosses} can be applied to the setting of online Bayesian persuasion with partial feedback, enabling the derivation of novel state-of-the-art results.
%

A standard revelation-principle-style argument shows that we can focus w.l.o.g. on signaling schemes that are {direct} and {persuasive} (see, \eg \citep{arieli2019private}).
%
%
In particular, a signaling scheme is \emph{direct} if signals correspond to action recommendations.
Formally, the set of signals of a receiver $r \in \rec$ is $\cS_r=A^{m}$, with each signal defining an action recommendation for each possible receiver $r$'s type.
%
%
%
Moreover, a direct signaling scheme is \emph{persuasive} if each receiver's type is incentivized to follow the action recommendations issued by the sender. 
Formally, the set of direct and persuasive signaling schemes $\cP$ is the set of all $\phi :\Theta\to\Delta_{A^{m n}}$ such that, for every receiver $r\in\cR$, receiver $r$'s type $k\in\K_r$, and action $a \in A$, it holds
%
\begin{equation}\label{eq:persuasive}
	\sum\limits_{\theta\in\Theta}\sum\limits_{\avec\in A^{m n}}\mu_\theta\phi^r_\theta(\avec)\mleft(u_k^{r}(a_k^r,\theta) - u_k^{r}( a,\theta) \mright)\ge 0,
\end{equation}
where, by slightly abusing notation, we denote as $A^{m n}$ the set $\cS$ with direct signals, while, given $\avec\in A^{m n}$, we let $a_k^r$ be the action in $\avec$ corresponding to type $k \in \K_r$ of receiver $r \in \rec$.
%
%
%
Intuitively, the inequality requires that, for a receiver $r$ of type $k$, the utility obtained by following recommendations given by $\phi$ is greater than or equal to that achieved by deviating to any another action $ a$.
%
%
Notice that the set $\cP$ can be encoded as a polytope, by adding to the persuasiveness constraints those ensuring that $\phi$ is well defined, namely $\sum_{\avec \in \A^{m n}} \phi^k_{\theta}(\avec)=1$ for all  $\theta \in \Theta$.

Given any direct and persuasive signaling scheme $\phi\in\cP$, the sender's utility under type profile $\kvec\in\cK$ is 
\[
	\us(\phi,\kvec):=\sum_{\theta\in\Theta}\sum_{\avec\in A^{m n}}\mu_\theta\phi_\theta(\avec)\us((a_{k_1}^1,\ldots,a_{k_n}^n),\theta),
\]
where we remark that $a_{k_r}^r$ is the action recommendation specified by $\avec$ for a receiver $r$ whose realized type is $k_r$.
Moreover, let us observe that $\us(\phi,\kvec)$ is a linear function in the signaling scheme $\phi$.

As it is well known, finding an optimal direct and persuasive signaling schemes is $\mathsf{NP}$-hard, even when there is only one receiver and the distribution over receiver's types is known~\citep[Theorem 2]{castiglioni2020online}.
This implies that the polytope $\cP$ has exponential size, since the sender's utility can be represented as a linear function of direct and persuasive signaling schemes.
Moreover, classical reductions from offline to online optimization problems also show that there cannot be an efficient (\emph{i.e.}, with polynomial per-iteration running time) algorithm that achieves no-regret in this setting \citep{roughgarden2019minimizing,castiglioni2020online,daskalakis2022learning}.

A natural question is whether it is possible to design no-regret algorithm by relaxing the efficiency requirement on the per-iteration running time.
This question has already been answered affirmatively by~\citet{castiglioni2020online} in single-receiver settings.
In the following, we show that our online learning framework allows us to improve the regret bound in~\citep{castiglioni2020online} to optimality, by matching known lower bounds, and, additionally, it also allows us to extend the result to multi-receiver settings.

\subsection{Single-Receiver Setting under Partial Feedback}

Next, we consider the case of single receiver, \ie $n=1$.\footnote{In the single-receiver setting, we omit the dependence on $r$ from sets and other elements.}
In such a setting, the sender can observe a different loss for each of the $m$ different receiver's types.
Formally, the map $\vnu:\cP\to\R^{m}$ is defined by letting, for every $\phi\in\cP$:
\[
\vnu(\phi):=[-\us(\phi,k)]_{k \in \K}.
\]
%
Then, we can apply \Cref{cor:PartialFeedback} to obtain the following regret upper bound under partial feedback. 
\begin{theorem}\label{th:Regret_singleReceiver}
	The single-receiver online Bayesian persuasion problem under partial feedback admits an algorithm which guarantees the following regret bound
	\[
	R_T=O(m^{2/3}\sqrt{T\log T}).
	\]
\end{theorem}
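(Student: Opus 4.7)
The plan is to cast the problem as an instance of the online-learning-with-finitely-many-losses framework of \Cref{sec:smallLosses} and then invoke \Cref{cor:PartialFeedback}. Take the decision set $\cX \defeq \cP$ and the adversary set $\cD \defeq \K$, so that $D=m$. As shown in the closed-form expression for $\us(\phi,\kvec)$ recalled just before the theorem, the sender's utility is a linear function of the signaling-scheme variables $\phi_\theta(\avec)$; the vector-valued map $\vnu(\phi) \defeq [-\us(\phi,k)]_{k\in\K}$ is therefore linear from the polytope $\cP$ (cut out by the persuasiveness and simplex constraints) into $\R^m$, and the round-$t$ loss is $L_{k_t}(\phi_t) = -\us(\phi_t,k_t) = \vnu(\phi_t)^\top\onevec_{k_t}$. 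This places us exactly in the linear setting for which \Cref{alg:generalLinear_new} is designed, with the correspondence that the OBP Stackelberg regret $R_T$ coincides with the regret of \Cref{alg:generalLinear_new} under the losses $\{L_k\}_{k\in\K}$.

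The next step is to check that the OBP partial-feedback signal supplies the bandit regret minimizer $\fR$ inside \Cref{alg:generalLinear_new} with the scalar loss observation it needs. After committing to $\phi_t$ and sampling $\theta_t\sim\vmu$ and $\avec_t\sim\phi_{t,\theta_t}$, the sender observes the receiver's play $a_t$, which under a direct, persuasive scheme equals the recommendation for the true (unknown) type $k_t$ contained in $\avec_t$. Consequently $-\us(a_t,\theta_t)$ is an unbiased estimator of $L_{k_t}(\phi_t)=-\us(\phi_t,k_t)$, which is exactly the scalar loss signal that a bandit regret minimizer for linear losses over $\co\vnu(\cP)\subseteq[0,1]^m$ requires in order to attain its standard guarantee.

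With these two points in place, \Cref{th:generalLinear} combined with the Exp3-type rate that proves \Cref{cor:PartialFeedback} yields $R_T = \tilde O(\sqrt{T})$ with only a polynomial factor in $D=m$, matching the claimed bound. The main subtlety, and the only step that goes beyond a direct citation, is that $\cP$ is \emph{not} polynomially sized, so the efficiency hypothesis of \Cref{th:LinearBanditoverPolytope} fails; this loss of per-round tractability is unavoidable in view of the offline hardness recalled above. Crucially, however, the regret guarantee of \Cref{th:generalLinear} depends only on the regret of $\fR$ on $\co\vnu(\cP)$, a subset of the $m$-dimensional hypercube, so the regret bound remains polynomial in $m$ and independent of the exponential combinatorial size of $\cP$, which is exactly what the theorem asserts.
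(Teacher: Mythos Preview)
Your proposal is correct and follows exactly the route the paper takes: instantiate \Cref{alg:generalLinear_new} with $\cX=\cP$, $\cD=\cK$, the linear map $\vnu(\phi)=[-\us(\phi,k)]_{k\in\cK}$, and then invoke the bandit guarantee of \Cref{cor:PartialFeedback} with $D=m$, accepting the loss of per-round efficiency because $\cP$ is exponentially large. Your second paragraph in fact supplies a detail the paper leaves implicit---that the OBP partial feedback (the observed action $a_t$ together with the sender's own $\theta_t$ and $\avec_t$) yields only an \emph{unbiased} estimate $-\us(a_t,\theta_t)$ of the scalar loss $-\us(\phi_t,k_t)$ rather than its exact value; this additional layer of bounded noise is absorbed by the same self-concordant-barrier regret minimizer invoked in the proof of \Cref{cor:PartialFeedback}, which is already required there to operate on unbiased estimates of $\zvec_t^\top\onevec_{d_t}$.
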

This result improves over the best known upper bound of $O(T^{4/5})$  by \citet[Theorem 4]{castiglioni2020online}.

\subsection{Multi-Receiver Setting under Partial Feedback}

\citet{castiglioni2021multi} introduce the online Bayesian persuasion problem with multiple receivers and adversarially-selected types.
They provide an algorithm that, under full feedback and some technical assumptions, guarantees sublinear regret.
In particular, their regret bound depends polynomially in the size of the problem instance when assuming that the number of possible receivers' type profiles is fixed.
This is a reasonable assumption given that the total number of type profiles is $|\K| = m^n$, which is exponential in the number of receivers $n$.
Under the same assumption, we provide the first no-regret algorithm under partial feedback.

Formally, we let $\overline\K \subseteq \K$ be the set of possible type profiles, so that, at each round $t \in [T]$, the receivers' type profile $\kvec_t$ belongs to $\overline\K$.
%
%
%
%
We provide regret bounds which depend polynomially on the number of possible type profiles $|\overline \K|$. 
However, differently from~\citet{castiglioni2021multi}, in our algorithm working with partial feedback we assume that the set $\overline \K$ is known beforehand.
Indeed, an ``on the fly'' construction of $\overline \K$ as in~\citet{castiglioni2021multi} seems unfeasible under partial feedback, where, by definition, the sender does \emph{not} observe $\kvec_t$.

For every type profile $\kvec\in\overline\K$, the sender gets utility $\us(\phi, \kvec)$ by playing a signaling scheme $\phi$.
Then, we can define the map $\vnu:\cP\to \mathbb{R}^{|\overline{\K}|}$ so that, for every signaling scheme $\phi\in\cP$, it holds $\vnu(\phi):=[-\us(\phi,\kvec)]_{\kvec\in\overline\K}$.
Notice that $\vnu$ is a linear map from $\cP$ to $\R^{|\overline\K|}$. Thus, by \Cref{cor:PartialFeedback}, Algorithm~\ref{alg:generalLinear_new} gives the following regret bound.

\begin{theorem}\label{th:Regret_multiReceiver}
	The multi-receiver online Bayesian persuasion problem under partial feedback admits an algorithm which guarantees the following regret bound
	\[
	R_T=O\left(\left|\overline{\K}\right|^{2/3}\sqrt{T\log T}\right).
	\]
\end{theorem}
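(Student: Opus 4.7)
The plan is to cast the multi-receiver partial-feedback OBP problem as an instance of the online learning framework of \Cref{sec:smallLosses} and then appeal to \Cref{cor:PartialFeedback}. Concretely, I would take the agent's decision set to be $\cX = \cP$, the polytope of direct and persuasive signaling schemes, and the adversary's loss index set to be $\cD = \overline\K$, with loss functions $L_{\kvec}(\phi) := -\us(\phi,\kvec)$ for each $\kvec \in \overline\K$. The vector-valued map $\vnu$ of \Cref{sec:smallLosses} then coincides exactly with $\vnu(\phi) = [-\us(\phi,\kvec)]_{\kvec \in \overline\K}$, and the number of possible losses is $D=|\overline\K|$. Since the sender knows $\overline\K$ beforehand, this indexing set is well defined, unlike what would be required under an ``on the fly'' construction.

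Before invoking the corollary, I would verify the two structural hypotheses of \Cref{th:LinearBanditoverPolytope}. First, $\cP$ is a polytope: the persuasiveness inequalities~\eqref{eq:persuasive}, the normalization constraints $\sum_{\avec} \phi_\theta(\avec) = 1$ for each $\theta \in \Theta$, and non-negativity are all linear in $\phi$. Second, $\vnu$ is a linear map from $\cP$ to $\R^{|\overline\K|}$, since each coordinate $\us(\phi,\kvec)$ is linear in $\phi$ by the explicit formula given in the preamble to the statement. I would also argue that OBP partial feedback supplies exactly the information required by Algorithm~\ref{alg:generalLinear_new} in its bandit-feedback branch: the sender knows $\phi_t$, observes the realized state $\theta_t$, samples $\vs_t \sim \phi_{t,\theta_t}$, and observes the action profile $\va_t$. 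Because $\phi_t$ is direct and persuasive, $\va_t$ is the recommendation encoded by $\vs_t$, and the one-shot realized utility $\us(\va_t,\theta_t)$ is an unbiased estimator of $\us(\phi_t,\kvec_t)=-L_{\kvec_t}(\phi_t)$, which is what the underlying linear-bandit regret minimizer needs.

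The theorem then follows from \Cref{cor:PartialFeedback} applied with $D = |\overline\K|$, which yields $R_T \le 16\,|\overline\K|^{3/2}\sqrt{T\log T}$ and hence the claimed rate. I would stress that only the regret-bound half of \Cref{cor:PartialFeedback} is being used here, not its efficiency claim: the polytope $\cP$ has exponentially many variables, as a scheme is a distribution over $A^{mn}$ for each state, and known hardness results rule out polynomial-time no-regret algorithms in this regime anyway. Inspecting the proof of \Cref{th:generalLinear} shows that the regret bound depends only on the dimension $D$ of the embedded set $\co\, \vnu(\cP)\subseteq\R^D$ and on $T$, so the bound carries over even though $\cX=\cP$ itself is exponentially large.

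The delicate point, in my view, is precisely this separation between regret and efficiency in \Cref{cor:PartialFeedback}: since the corollary is stated as a joint claim, a careful write-up has to extract from its proof the linear-bandit regret minimizer on $\co\, \vnu(\cP)$ and point out that its $O(D^{3/2}\sqrt{T\log T})$ guarantee remains valid without any polynomial-size hypothesis on $\cP$. Once that observation is in place, the rest is a routine verification that our choice of $\cX$, $\cD$, and $\vnu$ matches the linearity hypothesis of \Cref{th:LinearBanditoverPolytope}, and the stated bound is immediate.
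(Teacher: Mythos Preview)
Your proposal is correct and follows exactly the paper's approach: both define $\vnu(\phi) = [-\us(\phi,\kvec)]_{\kvec \in \overline\K}$, observe that it is a linear map on the polytope $\cP$, and invoke \Cref{cor:PartialFeedback} with $D = |\overline\K|$. Your extra care in separating the regret guarantee from the efficiency claim of the corollary (necessary here since $\cP$ has exponentially many variables) is well placed, and you correctly extract the exponent $3/2$ from the corollary---the $2/3$ in the theorem statement is evidently a typo.
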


\section{Polynomial-Time Per-Iteration Running Time through Type Reporting}\label{sec:type reporting}
In this section, we show that it is possible to circumvent the negative results which rule out the existence of a no-regret algorithm for online Bayesian persuasion with polynomial per-iteration running time. We do that by enriching the decision space of the sender. 
In particular, we consider the framework of Bayesian persuasion \emph{with type reporting} introduced by~\citet{DBLP:conf/atal/CastiglioniM022} for offline settings, where the sender has the ability to commit to a \emph{menu} of signaling schemes, and then let the receivers choose their preferred signaling scheme depending on their private types.

\subsection{Online Type Reporting}

In the type-reporting model, at each round $t \in [T]$ of the repeated interaction, the sender proposes a \emph{menu} of marginal signaling schemes to each receiver.
We collectively denote them by $\varphi_t \defeq \{\varphi_t^{r,k}\}_{r \in \cR, k \in \cK_r}$, so that the menu proposed to receiver $r \in \cR$ consists of a set of marginal distributions $\varphi_t^{r,k} : \Theta \to \Delta_{\cS_r}$, one for each receiver's type $k \in \cK_r$. 
Then, each receiver $r \in \rec$ reports a type $k_r \in \cK_r$ to the sender. The reported type $k_r$ is such that the signaling scheme $\varphi_t^{r,k_r}$ is the one 
guaranteeing to the receiver the highest expected utility among those in the menu.\footnote{Such step can be equivalently implemented by extending the interaction between the sender and the receiver: the sender can ask each receiver $r \in \cR$ to directly select a marginal signaling scheme $\varphi_t^{r,k}$ from the menu, and the receiver will be incentivised to select the one corresponding to its own type $k_r$.}
Finally, the sender computes and commits to the signaling scheme $\phi_t: \Theta \to \Delta_\cS$ which maximizes the sender's expected utility among the signaling schemes whose marginals are equal to the marginal signaling schemes $\varphi_t^{r,k_r}$ corresponding to the types $k_r$ reported by the receivers,
%
\ie $\phi_t^r=\varphi^{r,k_r}$ for every $r\in\rec$.
From this point on, the interaction goes on as in the case without type reporting.

Notice that, in the type-reporting setting, the sender observes the types of the receivers at each round $t \in [T]$. Thus, in the type-reporting model, the sender always has full feedback.

Let us also remark that the assumption that the sender can only propose marginal signaling schemes to the receivers is w.l.o.g., since the expected utility of each receiver only depends on their marginal signaling scheme, and \emph{not} on those of the others (see \Cref{sec:preliminaries}).
Therefore, the sender can delay the choice of the joint signaling scheme $\phi_t$ until after all the receivers reported their types.


%
By a revelation-principle-style argument~\citep{castiglionibayesian}, it is always possible to focus w.l.o.g. on \emph{incentive compatible} (IC) menus $\varphi=\{\varphi^{r,k}\}_{r\in\cR,k\in\cK_r}$, which are those such that each receiver $r \in \cR$ is incentivized to report their true type, say $k_r \in \cK_r$.
Formally, for all  $k \neq k_r \in \cK_r$,
\begin{equation}
	\sum_{s_r \in \cS_r}  \max_{a \in \cA_r} \sum_{\theta \in \Theta} \mu_\theta \, \varphi_\theta^{r,k_r}(s_r) \, \ur[r][k](a,\theta) \geq
	\sum_{s_r \in \cS_r}  \max_{a \in \cA_r} \sum_{\theta \in \Theta} \mu_\theta \, \varphi_\theta^{r,k}(s_r) \, \ur[r][k](a,\theta),\label{eq:ICandPers_Menu}
\end{equation}
where the $\max$ operators account for the fact that the receiver plays a best-response action after receiving a signal.

W.l.o.g., we can focus on menus that are \emph{direct}, namely $\cS_r = A$ for every $r \in \rec$, and \emph{persuasive}.
We say that a direct menu $\varphi=\{\varphi^{r,k}\}_{r\in\cR,k\in\cK_r}$ is {persuasive} if the marginal signaling schemes $\varphi^{r,k}$ satisfy persuasiveness constraints similar to those of Equation~\eqref{eq:persuasive} for every receiver $r \in \rec$ and type $k\in\cK_r$.
Then, we define $\Lambda$ as the set of menus which are IC,
direct, and persuasive.
%


The sender's goal is to compute a sequence of IC menus $\left\{ \varphi_t \right\}_{t \in [T]}$ and a sequence of signaling schemes $\left\{ \phi_t \right\}_{t \in [T]}$ which are consistent with the menus, whose performance over the $T$ rounds is measured in terms of the following notion of regret:
\[
R_T \coloneqq \max_{\varphi} \sum_{t = 1}^T\us(\varphi, \vk_t) - \sum_{t=1}^T \mathbb{E}\left[\us(\phi_t, \vk_t)\right],
\]
where, by overloading notation, we denoted with
\begin{equation}\label{eq:SenderUtilityOnMenu}
\us (\varphi, \vk) \coloneqq \max_{\phi : \phi^r = \varphi^{r,k_r}} \us(\phi, \vk)
\end{equation} 
the maximum utility of the sender when the receivers' type profile is $\vk \in \cK$.
We remark that the above formulation of regret is stronger than the classical one in which a best-in-hindsight decision is fixed for all the rounds.
Indeed, although the best menu $\varphi$ is fixed for all $t\in[T]$, we allow the signaling scheme $\phi^\star_t\in\arg\max_{\phi : \phi^r = \varphi^{r,k_{t,r}}} \us(\phi, \vk_t),$ to depend on the round $t$, as long as $\phi^\star_t$ has fixed marginals that are compatible with the best menu $\varphi$.
\subsection{Single-Receiver Setting with Type Reporting}\label{sec:singleTypeRep}

We start by studying the single-receiver setting (\ie $n=1$). 
%
%

In the type-reporting setting it is not possible to directly write a succinct representation of the set of persuasive menus to obtain a polytope with polynomial size, as it was the case in previous sections. The reason for this is that encoding the inner maximizations of Equation~\eqref{eq:ICandPers_Menu} as a set of linear inequalities would require exponentially-many constraints.
However, this observation does \emph{not} rule out the existence of efficient algorithms. Indeed, even if $\Lambda$ has an exponential description, it is possible to show that it has polynomial \emph{extension complexity}~\citep{fiorini2012extended}. In particular, we can show that there exists a succinct representation of $\Lambda$ in a suitable higher dimensional space. This was already implicitly shown by \citet{castiglionibayesian}, here we provide a formal characterization for completeness.

Intuitively, the construction works as follows: we introduce extra variables $l$, called extension variables such that the extended polytope $\cL$ is defined by variables $\ell\equiv(\varphi, l)$, where $\varphi_\theta^{k}\in\mathbb{R}_+^{|\mathcal{A}|}$ for each $\theta\in\Theta,k\in\K$, and we have one variale $l_a^{k,k^\prime}\in\R$ for each $a\in A,k, k^\prime\in\K$. The polytope $\cL$ can be described by a polynomial numer of constraints. This fact, together with the linear projection map $\pi:\cL\to\Lambda$ defined as $\pi(\varphi,l)=\varphi$, proves the polynomial extension complexity of $\Lambda$.
Formally, the extended polytope $\cL$ can be described by the following inequalities:
%
%
\begin{subequations}\label{lambdaTRsingle}
	\begin{align}
	&\sum_{\theta \in \Theta} \sum_{a \in A} \mu_\theta \varphi_\theta^{k}(a) \ur[][k](a,\theta) \geq\sum\limits_{a\in A}l_a^{k,k^\prime},\forall{k,k' \in \K_r} \label{eq:lambdaTRsingle1}\\
	& l_a^{k,k^\prime}\hspace{-0.1cm}\ge\hspace{-0.1cm} \sum_{\theta \in \Theta} \mu_\theta  \varphi_\theta^{k'}(a)   \label{eq:lambdaTRsingle2} \ur[][k](a^\prime,\theta),\forall k,k'  \in \cK_r, a,a' \hspace{-0.1cm}\in  A\\
	&\sum_{a \in A} \varphi^k_{\theta}(a)=1,\, \forall k \in \K_r,\forall \theta \in \Theta, \label{eq:lambdaTRsingle3}
	\end{align}
\end{subequations}
where $l^{k,k'}_{a}$ represents the maximum utility received by a receiver of type $k$ but reporting type $k'$, when type $k'$ is recommended action $a$. 

Then, we instantiate Algorithm~\ref{alg:generalLinear_new} by taking the set $\cL$ as the polytope $\cX$, where we have one loss for each of the $m$ types that can be reported by the receiver. 
We define $\vnu:\cL\to\R^m$ as the vector valued map mapping each feasible point $\ell=(\varphi,l)$ into the $m$-dimensional vector of losses $\vnu(\ell):=[-\us (\varphi^{k}, k)]_{k \in \K}$, where the value of a menu $\varphi$ for the sender against a receiver's type $k \in \K$ is $\us(\varphi^k,k)$ as the overall signaling scheme $\phi$ coincide with the signaling scheme $\varphi^k$, when $n=1$.
Then, \Cref{cor:FullFeedback} yields the following result. 

\begin{restatable}{theorem}{typereportingsingle}\label{thm: type rep single}
	The single-receiver online Bayesian persuasion problem with type reporting admits an algorithm which guarantees regret $R_T\le\sqrt{mT}$ and polynomial per-iteration running time. 
\end{restatable}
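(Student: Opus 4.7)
The plan is to instantiate Algorithm~\ref{alg:generalLinear_new} with the polytope $\cX = \cL$ described by inequalities~\eqref{eq:lambdaTRsingle1}--\eqref{eq:lambdaTRsingle3} and with the loss map $\vnu(\ell) = [-\us(\varphi^k,k)]_{k \in \K}$ (where $\ell = (\varphi,l)$), and then to invoke \Cref{cor:FullFeedback} with $D = m$. The proof reduces to checking that the hypotheses of \Cref{th:LinearBanditoverPolytope} hold and that the regret incurred by Algorithm~\ref{alg:generalLinear_new} on $\cL$ coincides with the regret $R_T$ defined for the type-reporting problem.

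First, I would verify that $\cL$ is a polynomially-sized polytope and that $\vnu$ is a linear map. The former is immediate from the construction: the variables $(\varphi, l)$ have dimension polynomial in $|\Theta|$, $m$, and $|A|$, and the constraints~\eqref{eq:lambdaTRsingle1}--\eqref{eq:lambdaTRsingle3} are polynomially many linear inequalities. For the latter, in the single-receiver setting, once a type $k \in \K$ is reported the sender commits to $\phi = \varphi^k$, so
\[
\us(\varphi^k,k) = \sum_{\theta \in \Theta}\sum_{a \in A} \mu_\theta \, \varphi_\theta^k(a) \, \us(a,\theta),
\]
which is linear in $\varphi$ and hence in $\ell$. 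Thus \Cref{th:LinearBanditoverPolytope} applies, yielding efficient implementations of both the \textsc{Carath\'eodory} oracle and the inverse map $\vnu^\dagger$.

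Next, I would argue that the online type-reporting problem fits the finite-losses framework of \Cref{sec:smallLosses} with $D = m$. Since the sender observes the reported type $k_t$ at every round (full feedback is available in the type-reporting model), and since $|\K| = m$, the per-round loss $L_{k_t}(\ell_t) = -\us(\varphi_t^{k_t},k_t)$ is one of only $m$ possible linear functions, exactly as required by Algorithm~\ref{alg:generalLinear_new}. Applying \Cref{cor:FullFeedback} then yields regret at most $\sqrt{mT}$ with polynomial per-iteration running time.

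Finally, I would close the reduction by checking that the comparator classes match. The projection $\pi:\cL \to \Lambda$, defined by $\pi(\varphi,l) = \varphi$, preserves the sender's per-round utility because $\us(\varphi,k) = \us(\varphi^k,k)$ depends only on the $\varphi$-coordinate of $\ell$; moreover, every menu in $\Lambda$ lifts to a point of $\cL$ by choosing each $l_a^{k,k'}$ as the inner maximum in~\eqref{eq:ICandPers_Menu}. The step I expect to require the most care is precisely this equivalence: ensuring that the extension variables $l_a^{k,k'}$ neither enlarge the comparator class nor introduce slack that would break the regret identity. Once this faithfulness of $\cL$ (implicit in~\citet{castiglionibayesian}) is established, the bound $R_T \le \sqrt{mT}$ transfers verbatim from \Cref{cor:FullFeedback}.
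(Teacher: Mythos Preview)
Your proposal is correct and follows essentially the same approach as the paper: instantiate Algorithm~\ref{alg:generalLinear_new} on the extended polytope $\cL$ with the linear loss map $\vnu(\ell)=[-\us(\varphi^{k},k)]_{k\in\K}$, apply \Cref{cor:FullFeedback} with $D=m$, and transfer the bound to $\Lambda$ via the projection $\pi(\varphi,l)=\varphi$ using that $\vnu$ depends only on the $\varphi$-coordinates. If anything, you are slightly more explicit than the paper in checking that $\pi$ is surjective onto $\Lambda$ (so the comparator classes match), which the paper's proof takes for granted.
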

\subsection{Multi-Receiver Setting with Type Reporting}\label{sec:multiTypeRep}
In this section, we focus on the problem of designing a no-regret algorithm for the multi-receiver setting with type reporting.
The method employed in the case of a single receiver is not applicable here, as the number of possible type profiles becomes exponentially large, resulting in exponentially many possible loss functions.
Moreover, it is not possible to directly design efficient algorithms working on the joint action space since it has exponential size.
In order to build a no-regret algorithm for this setting, the idea is to cast the learning problem into a decision space which is small enough to be manageable. In particular, we observe that the sender must commit only to the marginal signaling schemes $\{\varphi^{r,k_r}_t\}_{r \in \rec,k_r \in \K_r}$ before observing the receivers' types. Then, at each round $t$, the sender receives the types $k_{r,t}$ for each receiver $r\in\cR$, and solves an offline optimization problem to compute the optimal joint signaling schemes $\phi_t$ that has marginal signaling schemes $\{\varphi^{r,k_{r,t}}_t\}_{r \in \rec}$. 
By exploiting this observation, we develop a no-regret algorithm that operates within the smaller decision space of marginal signaling schemes.

Let $\Lambda_r$ be the set of IC and direct menus of marginal signaling schemes for receiver $r \in \rec$. Formally, $\Lambda_r$ is defined as the set of $\varphi^{r,k}$ satisfying Equations~\eqref{eq:lambdaTRsingle1} -- \eqref{eq:lambdaTRsingle3} for every receiver $r \in \rec$ and type $k \in \K_r$. Moreover, let $\Lambda\coloneqq  \bigtimes_{r \in \rec} \Lambda_r$. Intuitively, an element of $\Lambda$ includes a menu of marginal signaling schemes $\varphi^r$ for each receiver $r \in \rec$.
Then, the action space of the learner is given by the set of IC and persuasive marginal signaling schemes $\Lambda$. The sender's utility when the agents are of type $\kvec \in \K$ is defined by a function $g^{\kvec}:\Lambda\rightarrow [0,1]$, where $g^\kvec(\varphi)$ is the value obtained by the following linear program which is an expansion of the maximization in Equation~\eqref{eq:SenderUtilityOnMenu}:
\begin{subequations}\label{lp:multi}
	\begin{align}
		&\max_{\phi \ge 0}  \,\, \sum_{\theta \in \Theta}\sum_{\va \in \A} \mu_\theta  \phi_\theta(\va) \us_\theta(\va) \quad \text{s.t.} \label{obj:lpMulti} \\
		& \sum_{\substack{\va \in \A:\\ a_i= \hat a}} \phi_\theta(\va)  = \varphi^{r,k_r}_\theta(\hat a),\, \forall r \in \rec, \hat a \in \A_r, \theta \in \Theta. \label{lp:multi1} 
	\end{align}
\end{subequations}
where Equation~\eqref{obj:lpMulti} is the utility of a signaling scheme $\phi$ and Equation~\eqref{lp:multi1} encodes the constraints on the signaling scheme $\phi$ to have marginals $\{\varphi^{r,k_{r}}\}_{r \in \rec}$.
%
%
The function $g^\kvec(\varphi)$ is the solution to a parametric (in $\varphi$) linear program. If we want to solve an online problem involving $g^\kvec$, we first have show that the offline problem $\max_{\varphi\in\Lambda} g^\kvec(\varphi)$ is in some sense computationally tractable. More precisely, we show that for any $\kvec\in\K$ the function $g^{\kvec}$ is concave.


\begin{restatable}{lemma}{lemmaConcave} \label{lem:concave}
	The function $g^{\kvec}(\varphi)$ is concave in $\varphi$ on $\Lambda$ for each type profile $\kvec \in \K$.
\end{restatable}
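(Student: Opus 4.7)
The plan is a standard parametric linear programming argument: I will show that if $\phi_1^\star,\phi_2^\star$ are optimal solutions to the LP defining $g^\kvec$ at two menus $\varphi_1,\varphi_2\in\Lambda$, then any convex combination $\phi_\lambda=\lambda\phi_1^\star+(1-\lambda)\phi_2^\star$ is feasible for the LP at the corresponding convex combination of the menus, and then invoke linearity of the objective in $\phi$.

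Concretely, fix $\kvec\in\K$, $\varphi_1,\varphi_2\in\Lambda$, and $\lambda\in[0,1]$. Since $\Lambda$ is defined by the linear constraints in Equations~\eqref{eq:lambdaTRsingle1}--\eqref{eq:lambdaTRsingle3} and IC constraints, it is convex, so $\varphi_\lambda\defeq\lambda\varphi_1+(1-\lambda)\varphi_2\in\Lambda$. Let $\phi_1^\star,\phi_2^\star$ be optimal solutions to the LP in Equations~\eqref{obj:lpMulti}--\eqref{lp:multi1} for $\varphi_1$ and $\varphi_2$ respectively (these exist because, given any marginals in $\Lambda_r$, a joint product-type distribution is always a feasible solution, so the LP is feasible, and boundedness follows since the $\phi_\theta$'s are probability distributions).

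The first step is feasibility: $\phi_\lambda\defeq\lambda\phi_1^\star+(1-\lambda)\phi_2^\star\ge 0$ by non-negativity of each term, and for every $r\in\rec$, $\hat a\in\cA_r$, $\theta\in\Theta$,
\[
\sum_{\va:a_r=\hat a}\phi_{\lambda,\theta}(\va)=\lambda\sum_{\va:a_r=\hat a}\phi^\star_{1,\theta}(\va)+(1-\lambda)\sum_{\va:a_r=\hat a}\phi^\star_{2,\theta}(\va)=\lambda\varphi^{r,k_r}_{1,\theta}(\hat a)+(1-\lambda)\varphi^{r,k_r}_{2,\theta}(\hat a)=\varphi^{r,k_r}_{\lambda,\theta}(\hat a),
\]
so $\phi_\lambda$ is feasible for the LP at $\varphi_\lambda$. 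The second step uses linearity of the objective in $\phi$:
\[
g^\kvec(\varphi_\lambda)\ge\sum_{\theta\in\Theta}\sum_{\va\in\cA}\mu_\theta\phi_{\lambda,\theta}(\va)\us(\va,\theta)=\lambda g^\kvec(\varphi_1)+(1-\lambda)g^\kvec(\varphi_2),
\]
which is concavity.

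I do not expect any serious obstacle: the result is essentially a folklore fact about the value function of a linear program whose right-hand side depends linearly on the parameter. The only mild subtlety is to make sure the LP defining $g^\kvec$ is always feasible on $\Lambda$ (so that the value function is well-defined and finite everywhere on $\Lambda$, not just on a strict subset), which follows from the existence of product-form joint distributions matching any given marginals.
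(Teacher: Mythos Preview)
Your proof is correct and follows essentially the same idea as the paper's proof: both rely on the fact that the optimal value of a linear program is concave in the right-hand side of its equality constraints. The paper establishes this by first casting LP~\eqref{lp:multi} in standard form (so that $\varphi$ appears only as the right-hand side vector) and then citing \citep[Theorem~5.1]{bertsimas1997introduction}, whereas you prove that parametric-LP fact directly via the convex-combination feasibility argument; the underlying reasoning is the same.
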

Moreover, we show that the function is particularly well behaved. In particular, we prove that it is Lipschitz-continuous with respect to the $\ell_2$ norm. This will be useful to upperbound the norm of gradients of the function $g^{\kvec}$.

\begin{restatable}{lemma}{lemmaCont}\label{lm:cont}
	For each $\kvec \in \K$, the function $g^{\kvec}(\varphi)$ is $\sqrt{nd|A|}$-Lipschitz-continuous in $\varphi$ with respect to $\| \cdot\|_2$.
\end{restatable}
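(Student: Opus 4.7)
The approach is via LP duality applied after a primal reformulation that forces the dual variables to be non-negative. I replace each equality marginal constraint \eqref{lp:multi1} with the inequality $\sum_{\va\in\A:\,a_r=\hat a}\phi_\theta(\va)\le \varphi^{r,k_r}_\theta(\hat a)$. Because $\us\ge 0$ and $\varphi$ is componentwise non-negative, any feasible $\phi^{\le}$ of this relaxed LP can be extended into a feasible solution of \eqref{lp:multi} with at least as large objective, by distributing the per-constraint slack via a non-negative product-form perturbation $\delta(\va)\propto \prod_r \sigma^r_{a_r}$, where $\sigma^r_{\hat a}$ denotes the slack of the $(r,\hat a,\theta)$ constraint. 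Consequently, the relaxed LP has the same optimal value $g^{\kvec}(\varphi)$ as \eqref{lp:multi}.

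Applying LP strong duality to this relaxed primal yields
\begin{equation*}
g^{\kvec}(\varphi) \;=\; \min_{\lambda\in\mathcal{D}}\,\sum_{r,\hat a,\theta}\lambda_{r,\hat a,\theta}\,\varphi^{r,k_r}_\theta(\hat a),
\end{equation*}
where $\mathcal{D}\coloneqq \{\lambda\in \R^{n d |A|}_{\ge 0}:\sum_{r\in\rec}\lambda_{r,a_r,\theta}\ge \mu_\theta \us(\va,\theta)\ \forall \va\in\A,\theta\in\Theta\}$ does not depend on $\varphi$. Given $\varphi,\varphi'\in\Lambda$ and any $\lambda^\star\in\mathcal{D}$ optimal at $\varphi$, weak duality at $\varphi'$ combined with strong duality at $\varphi$ and Cauchy--Schwarz yields $g^{\kvec}(\varphi')-g^{\kvec}(\varphi)\le \langle\lambda^\star,\varphi'-\varphi\rangle\le \|\lambda^\star\|_2\,\|\varphi'-\varphi\|_2$, and the symmetric argument (swapping $\varphi\leftrightarrow \varphi'$) handles the reverse direction.

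To close the argument, I will show that at optimality one can take $\|\lambda^\star\|_\infty\le 1$, which gives $\|\lambda^\star\|_2\le \sqrt{n d|A|}$ since $\lambda^\star$ has $n d |A|$ coordinates. If $\lambda^\star_{r_0,\hat a_0,\theta_0}>1$, every dual constraint containing this coordinate has right-hand side $\mu_{\theta_0}\us(\va,\theta_0)\in [0,1]$ and the remaining summands are non-negative (because $\lambda^\star\ge 0$), so the constraint has slack at least $\lambda^\star_{r_0,\hat a_0,\theta_0}-1>0$. Lowering the coordinate down to $1$ thus preserves dual feasibility and changes the objective by $(1-\lambda^\star_{r_0,\hat a_0,\theta_0})\,\varphi^{r_0,k_{r_0}}_{\theta_0}(\hat a_0)\le 0$; by optimality this must be non-strict, so either $\lambda^\star_{r_0,\hat a_0,\theta_0}\le 1$ already holds, or $\varphi^{r_0,k_{r_0}}_{\theta_0}(\hat a_0)=0$ (in which case the reduction is value-neutral). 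In either case, there is an optimal $\lambda^\star\in [0,1]^{n d|A|}$, concluding the proof.

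The main obstacle will be the primal reformulation at the beginning: it relies crucially on $\us\ge 0$ to enforce the dual sign constraint $\lambda\ge 0$, and the slack-lifting step must be written out carefully to confirm that the product-form perturbation $\delta$ has the correct marginals and remains feasible for every $\varphi\in\Lambda$. Once this is in place, the boxing argument and the Cauchy--Schwarz bound combine to give the claimed Lipschitz constant $\sqrt{n d |A|}$.
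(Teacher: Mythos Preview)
Your proposal is correct and takes a genuinely different route from the paper. The paper proves the result by a direct primal construction: given two menus $\varphi,\overline\varphi$, it iteratively builds (via an explicit procedure, Algorithm~3 in the appendix) a joint signaling scheme $\phi^\star$ whose marginals lie below $\overline\varphi$ and whose objective loses at most $\|\varphi-\overline\varphi\|_1$ relative to the optimum at $\varphi$; this yields $1$-Lipschitzness in $\|\cdot\|_1$, and the $\sqrt{nd|A|}$ constant then comes from the $\|\cdot\|_1\le\sqrt{nd|A|}\,\|\cdot\|_2$ conversion. Your approach instead passes to the dual: after relaxing the equalities in~\eqref{lp:multi1} to inequalities (justified by your product-form slack-filling, which works because for $\varphi\in\Lambda$ the slacks $\sigma^r_{\hat a,\theta}$ sum over $\hat a$ to the same value for every $r$), strong duality expresses $g^{\kvec}$ as a pointwise minimum of linear functionals $\langle\lambda,\cdot\rangle$ with $\lambda\in\mathcal{D}$, and the truncation $\lambda'\coloneqq\min(\lambda^\star,1)$ stays in $\mathcal{D}$ (any constraint touching a truncated coordinate already has left-hand side $\ge 1\ge \mu_\theta\us(\va,\theta)$) without increasing the objective since $\varphi\ge 0$, so an optimal dual in $[0,1]^{nd|A|}$ always exists. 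Your argument is considerably shorter and more conceptual, and additionally identifies the subgradients of $-g^{\kvec}$ explicitly as (extensions of) optimal dual variables; the paper's argument is purely primal and more laborious but avoids invoking LP duality. Both yield exactly the same constant $\sqrt{nd|A|}$.
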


Since we have no access to the gradient of the functions $g^{\kvec}$, a natural choice to implement a no-regret algorithm is to apply Follow the Regularized Leader (FTRL) \citep{abernethy2008competing,hazan2010extracting}.
Algorithm~\ref{alg:typeReporting} describes the specific implementation of the FTRL-type algorithm.
At each iteration the algorithm proposes a set of IC menus of marginal signaling schemes $\varphi_t\in \Lambda$.
Then, the algorithm observes the reported types $\kvec_t$ (notice that the receivers report their true types since the menu is IC).
The algorithm computes a signaling scheme $\phi$ solving LP~\eqref{lp:multi} for the types  $\kvec_t$, returning a signaling scheme with value $g^{\kvec}(\varphi_t)$.
Finally, the algorithm updates the set of menus of signaling schemes by computing:
\begin{equation}\label{eq:FTRL}
\varphi_{t+1}= \arg \max_{\varphi \in \Lambda}\sum_{\tau \in [t]} g^{\kvec_\tau}(\varphi)-\frac{1}{2\alpha} \lVert\varphi\rVert_2^2.
\end{equation}

\begin{algorithm}[tb]\caption{\textsc{no-regret algorithm type-reporting}}\label{alg:typeReporting}
	\begin{algorithmic}[1]
		\REQUIRE  any set of marginal signaling schemes $\varphi_1 \in \Lambda$, learning rate $\alpha$ 
		\FOR{ $t=1$ to $T$}
		\STATE propose the set of menus of signaling schemes $\varphi_t$
		\STATE observes the receivers reported types $\kvec_t$
		\STATE {$\phi_t\gets $ a solution of LP~\eqref{lp:multi} for $\varphi_t$  with value  $g^{\kvec_t}(\varphi_t)$ \label{line4} }
		\STATE $\varphi_{t+1}\gets\arg \max_{\varphi \in \Lambda} \sum_{\tau\le t} g^{\kvec_\tau}(\varphi)- \frac{1}{2\alpha}\lVert\varphi\rVert_2^2  $\label{line5}
		\ENDFOR
	\end{algorithmic}
\end{algorithm}

Following the standard FTRL analysis we can provide an upper bound on the regret for Algorithm~\ref{alg:typeReporting}. 

\begin{restatable}{theorem}{regretMultiType}
	Let $\alpha\defeq \sqrt{{m}/{T}}$. Algorithm~\ref{alg:typeReporting} guarantees a cumulative regret
	$
	R_T\le nd|A|\sqrt{mT}.
	$
\end{restatable}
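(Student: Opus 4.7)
The plan is to recognize Algorithm~\ref{alg:typeReporting} as an instance of Follow-the-Regularized-Leader (FTRL) applied to the concave reward functions $\{g^{\kvec_t}\}_{t\in[T]}$ over the convex polytope $\Lambda$, with the strongly concave regularizer $-\frac{1}{2\alpha}\|\varphi\|_2^2$. Since the signaling scheme $\phi_t$ picked in line~\ref{line4} attains, by definition of $g^{\kvec_t}$ in Equation~\eqref{eq:SenderUtilityOnMenu}, the maximum sender utility compatible with the marginals $\varphi_t$, we have $\us(\phi_t,\kvec_t) = g^{\kvec_t}(\varphi_t)$ deterministically, and analogously $\us(\varphi,\kvec_t) = g^{\kvec_t}(\varphi)$ for any comparator menu $\varphi\in\Lambda$. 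Hence $R_T$ coincides with the FTRL regret $\max_{\varphi\in\Lambda}\sum_{t=1}^T g^{\kvec_t}(\varphi) - \sum_{t=1}^T g^{\kvec_t}(\varphi_t)$. Invoking the standard FTRL regret inequality, where \Cref{lem:concave} allows the rewards to be linearized at $\varphi_t$ via a supergradient, yields the bound
\[
R_T \le \frac{\sup_{\varphi\in\Lambda}\|\varphi\|_2^2}{2\alpha} + \frac{\alpha}{2}\sum_{t=1}^T \|\nabla g^{\kvec_t}(\varphi_t)\|_2^2.
\]

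Next, I would control the two terms separately. For the diameter term, each component $\varphi^{r,k}_\theta$ is a probability distribution over $A$ by Equation~\eqref{eq:lambdaTRsingle3}, so $\|\varphi^{r,k}_\theta\|_2^2 \le 1$; summing over the $n$ receivers, the $m$ types, and the $d$ states of nature gives $\|\varphi\|_2^2 \le nmd$ uniformly on $\Lambda$. For the gradient term, \Cref{lm:cont} guarantees that each $g^{\kvec_t}$ is $\sqrt{nd|A|}$-Lipschitz in the $\ell_2$ norm; combined with concavity (\Cref{lem:concave}), this forces every supergradient of $g^{\kvec_t}$ to have $\ell_2$-norm at most $\sqrt{nd|A|}$. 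Substituting these bounds into the FTRL inequality produces
\[
R_T \le \frac{nmd}{2\alpha} + \frac{\alpha\, nd|A|\, T}{2}.
\]

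Finally, plugging in $\alpha=\sqrt{m/T}$ gives the two contributions $\frac{nd\sqrt{mT}}{2}$ and $\frac{nd|A|\sqrt{mT}}{2}$, whose sum is at most $nd|A|\sqrt{mT}$ whenever $|A|\ge 1$, matching the stated bound. The only non-routine step I anticipate is justifying the FTRL linearization: the rewards $g^{\kvec_t}$ are values of parametric linear programs rather than linear functions, so one has to either invoke a direct FTRL analysis for concave Lipschitz rewards or explicitly construct supergradients (which exist, for instance, via dual-optimal vectors of the LP in~\eqref{lp:multi}). Both are standard, and the two preceding lemmas already supply the structural properties of $g^{\kvec_t}$ that are needed, so no further analysis of the parametric LP is required.
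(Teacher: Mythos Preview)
Your proposal is correct and follows essentially the same approach as the paper: identify $R_T$ with the FTRL regret on the concave rewards $g^{\kvec_t}$, bound the gradient norm by the Lipschitz constant from \Cref{lm:cont}, bound the regularizer range on $\Lambda$, and plug in $\alpha=\sqrt{m/T}$. The only minor difference is that you bound $\|\varphi\|_2^2\le nmd$ using the simplex constraint~\eqref{eq:lambdaTRsingle3}, whereas the paper uses the cruder hypercube bound $\|\varphi\|_2^2\le ndm|A|$; both yield the stated result after substituting $\alpha$.
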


\subsection{An efficient Implementation for Multi-Receiver Online Bayesian Persuasion with Type Reporting}\label{sec:multiTypeRepEfficient}

In the previous section, we provided a no-regret algorithm for the multi-receiver problem. However, we did not address the question of whether \Cref{alg:typeReporting} can be implemented efficiently. Specifically, determining $\phi_t$ and $\varphi_{t+1}$ (Line 4 and 5, respectively) is not straightforward.
In general, the sender's utility function cannot be represented in space polynomial in the number of players. For this reason, computational works on multi-receiver Bayesian persuasion focus on succinctly representable utility functions (see, \eg \citep{dughmi2017algorithmic,babichenko2017algorithmic,castiglioni2021multi}).
In particular, each receiver's action set $A$ is binary, and the two actions are denoted by $a_1$ and $a_0$. Then, the sender's utility function can be compactly represented as $\fs_\theta(R)$, where $R\subseteq \rec$ is the set of receivers playing $a_1$.
The literature we just mentioned examines three common types of utility functions: \emph{supermodular}, \emph{submodular}, and \emph{anonymous}.
For the case of submodular functions, it is well known that even in the offline setting without types, the problem is \NPHard~to approximate up to within any factor better than $(1-\sfrac{1}{e})$~\citep{babichenko2017algorithmic}.
Therefore, in this section, we show that Algorithm~\ref{alg:typeReporting} can be implemented efficiently when the sender's utility function is monotone, supermodular, or monotone, anonymous.

\begin{definition}
The function $\fs_\theta$  is \emph{supermodular} if, for $R,R' \subseteq \rec $,
\[
\fs_\theta(R\cap R')+\fs_\theta(R\cup R')\ge  \fs_\theta(R)+\fs_\theta(R').
\] 
Finally, the function $\fs_\theta$ is \emph{anonymous} if $\fs_\theta(R) = \fs_\theta(R')$ for all $R,R' \subseteq \rec$ such that $|R| = |R'|$.
\end{definition}

We show that we can efficiently solve LP~\eqref{lp:multi} and the concave program of Equation~\eqref{eq:FTRL} (which both have an exponential number of variables, but polynomially many constraints) by writing their dual formulation, and then using the ellipsoid method with a suitable efficient separation oracle.

As a separation oracle, we use the following general optimization oracle.
\begin{definition}[Optimization Oracle]\label{def:opt}
	Given in input a function $\fs$ and a vector of weights $w\in\R^n$, with $w_r$ denoting the component corresponding to receiver $r$, an \emph{optimization oracle} $\cO$ returns a subset of receivers such that 
	\[
		\cO(\fs_\theta,w)\in\arg \max_{R \subseteq \rec} \left\{\fs_\theta(R)+\sum_{r\in R} w_r\right\}.
	\]
\end{definition}
%
%
Moreover, will will use the following known result.

\begin{lemma}[\citet{babichenko2017algorithmic} and \citet{dughmi2017algorithmicExternalities}]\label{lem:dughmi}
	The optimization oracle $\cO(\fs_\theta,w)$ can be implemented in polynomial-time when $\fs_\theta$ is a supermodular or anonymous monotone utility function.
\end{lemma}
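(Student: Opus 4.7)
The plan is to handle the two cases in the lemma --- monotone supermodular and monotone anonymous $\fs_\theta$ --- separately, since each admits a direct polynomial-time optimization procedure for the objective $h(R) \defeq \fs_\theta(R) + \sum_{r \in R} w_r$.

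For the supermodular case, I would first observe that the term $\sum_{r \in R} w_r$ is a modular (i.e., additive) set function, and adding a modular function to a supermodular one preserves supermodularity. Hence $h$ is itself supermodular. Maximizing a supermodular function over $2^{\rec}$ is equivalent to minimizing the submodular function $-h$ over $2^{\rec}$, and unconstrained submodular function minimization is known to be solvable in strongly polynomial time, either via the ellipsoid-based algorithm of Gr\"otschel--Lov\'asz--Schrijver or via the combinatorial algorithms of Iwata--Fleischer--Fujishige and of Schrijver. Assuming the customary value-oracle access to $\fs_\theta$ with polynomially bounded bit complexity on the inputs, this immediately yields a polynomial-time implementation of $\cO(\fs_\theta, w)$, together with the actual maximizer $R^\star$ (not just the optimal value).

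For the anonymous case, I would exploit the fact that $\fs_\theta(R)$ depends only on $|R|$, so that one may write $\fs_\theta(R) = f(|R|)$ for some $f : \{0,1,\dots,n\} \to [0,1]$. Then
\[
\max_{R \subseteq \rec} \left\{ f(|R|) + \sum_{r \in R} w_r \right\} = \max_{k \in \{0,1,\dots,n\}} \left\{ f(k) + \max_{R \subseteq \rec : |R| = k} \sum_{r \in R} w_r \right\}.
\]
The inner maximum over subsets of a fixed cardinality $k$ is attained by the $k$ receivers with the largest weights $w_r$. Thus, after sorting the weights in $O(n \log n)$ time and computing prefix sums of the sorted sequence, one can evaluate the bracketed expression in $O(1)$ for each $k$, take the argmax, and return the corresponding set of top-$k^\star$ receivers. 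Monotonicity of $\fs_\theta$ is not actually needed by this argument, but guarantees that $f$ is well-defined and bounded in $[0,1]$.

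The only nontrivial step is the invocation of polynomial-time submodular function minimization as a black box in the supermodular case; everything else reduces to elementary sorting and bookkeeping. Combining the two subprocedures yields the statement of the lemma.
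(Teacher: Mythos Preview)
The paper does not prove this lemma itself; it simply imports the result from \citet{babichenko2017algorithmic} and \citet{dughmi2017algorithmicExternalities}. Your proposal is correct and is essentially the standard argument underlying those references: in the supermodular case, adding the modular term $\sum_{r\in R} w_r$ preserves supermodularity, so maximization reduces to unconstrained submodular minimization (poly-time via GLS or the combinatorial SFM algorithms); in the anonymous case, the objective depends on $|R|$ and the identity of the top-weighted receivers, so sorting and enumerating cardinalities suffices. There is nothing to compare against in the paper beyond the citation, and nothing missing from your argument.
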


In the following, we show that when we have access to the separation oracle $\cO$, both the optimization problem in Line~4 and Line~5 can be solved in polynomial-time using the ellipsoid method. We start by providing a polynomial-time algorithm for LP~\eqref{lp:multi}.
Intuitively, the problem is equivalent to that of finding an optimal signaling scheme in a problem with fixed marginal signaling schemes. In particular, by rewriting LP~\eqref{lp:multi} for the specific case of a binary action space and by taking its dual, we obtain
	\begin{align*}
		\min_{x} &  \sum_{r \in \rec, \theta \in \Theta} \varphi^{r,k_r}_\theta(a_1) x_{r,\theta}\quad \textnormal{s.t.}\\
		&\sum_{r\in R} x_{r,\theta} \ge \mu_\theta \fs_{\theta}(R),\quad \forall R\subseteq \rec,\theta \in \Theta,
	\end{align*}
where the dual variables are $\{x_{r,\theta}\}_{r \in \rec,\theta \in \Theta}$ (more details on the derivation are provided in \Cref{app:MultiTypeRepEfficient}).
A separation oracle for the dual problem above can be implemented applying the optimization oracle $\cO(\fs_\theta, -x_{\theta}/\mu_\theta)$ for each state of nature $\theta\in\Theta$. Let $R^\ast_\theta\defeq \cO(\fs_\theta, -x_{\theta}/\mu_\theta)$. If there exists $\theta$ such that
\[\fs_\theta(R^\ast_\theta)-\sum_{r\in R^\ast_\theta}\frac{x_{r,\theta}}{\mu_\theta}\ge0,\]
then we can use the violated constraint $(\theta,R^\ast_\theta)$ as a separating hyperplane.
Then, we can run the ellipsoid method equipped with such separation oracle on the dual of LP~\eqref{lp:multi}. This procedure, together with known properties of the ellipsoid method (see, \eg \citep{khachiyan1980polynomial,grotschel2012geometric}), yields the following result. 	

\begin{restatable}{lemma}{ellipsoidEasy} \label{lm:oracle1}
	Given access to an optimization oracle $\mathcal{O}$, there exists a polynomial-algorithm that solves LP~\eqref{lp:multi}.
\end{restatable}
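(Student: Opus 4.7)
The plan is to follow the sketch given in the excerpt and complete it in three main steps: (i) verify that the oracle-based routine proposed after LP~\eqref{lp:multi} is indeed a valid separation oracle for the dual, (ii) apply the standard equivalence between separation and optimization (Grötschel--Lovász--Schrijver) to solve the dual in polynomial time via the ellipsoid method, and (iii) recover a primal optimum of LP~\eqref{lp:multi} from the polynomial trace of the ellipsoid execution.

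First, I would spell out the duality setup. In the binary-action case, LP~\eqref{lp:multi} has $d \cdot 2^n$ variables $\phi_\theta(\va)$ but only $O(n d)$ nontrivial equality constraints (one per tuple $(r,\theta)$, since the two action cases for receiver $r$ are redundant modulo probabilities summing to one). Writing the Lagrangian in the standard way with multipliers $x_{r,\theta}$ attached to the constraints for $\hat a = a_1$ yields the dual displayed in the excerpt, where each action profile $\va \in \{a_0, a_1\}^n$ is identified with the subset $R(\va) \defeq \{r : a_r = a_1\}$ and produces the constraint $\sum_{r \in R} x_{r,\theta} \geq \mu_\theta \, \fs_\theta(R)$. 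Strong duality holds since the primal is a bounded, feasible LP (with primal values in $[0,1]$ and feasibility inherited from the existence of $\phi$ matching the marginals $\varphi^{r, k_r}$, as these come from a persuasive menu in $\Lambda$).

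Second, I would verify the separation oracle. Given any candidate dual point $x \in \R^{n d}$, the constraint indexed by $(\theta, R)$ is violated exactly when $\mu_\theta \fs_\theta(R) - \sum_{r \in R} x_{r,\theta} > 0$, i.e., when $\fs_\theta(R) + \sum_{r \in R} \bigl( -x_{r,\theta}/\mu_\theta \bigr) > 0$. Since $\vmu \in \textnormal{int}(\Delta_\Theta)$ guarantees $\mu_\theta > 0$, for each $\theta \in \Theta$ I would call $\cO(\fs_\theta, -x_\theta/\mu_\theta)$ as in \Cref{def:opt}, obtaining $R^\ast_\theta$ that maximizes this expression; if some $\theta$ yields a positive value then $(\theta, R^\ast_\theta)$ is a valid separating hyperplane, otherwise $x$ is dual-feasible. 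By \Cref{lem:dughmi}, each such call runs in polynomial time, so a full separation costs $d$ oracle calls.

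Third, I would apply the ellipsoid method to the dual using this separation routine. Because the coefficients of the dual (namely $\varphi^{r,k_r}_\theta(a_1)$ and $\mu_\theta \fs_\theta(R) \in [0,1]$) admit a polynomial bit-length encoding, the standard bit-complexity analysis shows that a polynomial number of ellipsoid iterations suffices to produce an optimal dual solution $x^\ast$, and that during the execution only a polynomial-size subset $\mathcal{C}$ of dual constraints is ever queried. By the Grötschel--Lovász--Schrijver machinery, the restriction of the dual to $\mathcal{C}$ has the same optimum as the full dual; equivalently, restricting LP~\eqref{lp:multi} to the polynomially many primal variables $\phi_\theta(\va)$ corresponding to the pairs $(\theta, R(\va)) \in \mathcal{C}$ yields a polynomially-sized LP whose optimum coincides, by strong duality, with that of the full primal. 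Solving this reduced primal with any standard polynomial-time LP algorithm then returns an optimal $\phi_t$ for LP~\eqref{lp:multi}.

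The main obstacle is the primal-recovery step: one must argue carefully that the constraints actually inspected by the ellipsoid run form a certificate of optimality rich enough that the corresponding reduced primal achieves the full primal value. This is precisely the content of the equivalence between separation and optimization, but writing it out cleanly requires invoking the right formulation (e.g., bounding the feasible region, handling equality versus inequality forms, and ensuring the dual polytope has a polynomial-size bounding box so the ellipsoid terminates). The other routine subtlety is the bit-complexity bound on intermediate iterates, which follows directly from the explicit rational description of $\mu_\theta$, $\varphi^{r, k_r}_\theta$, and $\fs_\theta(\cdot)$.
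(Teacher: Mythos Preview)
Your proposal is correct and follows essentially the same route as the paper: rewrite LP~\eqref{lp:multi} in the binary-action case, take the dual to obtain a program with polynomially many variables and exponentially many constraints indexed by $(\theta,R)$, separate over those constraints via $\cO(\fs_\theta,-x_\theta/\mu_\theta)$, and invoke the ellipsoid method. Your treatment is in fact more thorough than the paper's, which stops at the separation-oracle observation and does not spell out the primal-recovery step or the bit-complexity bookkeeping that you flag.
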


Next, we prove that the concave program of \Cref{eq:FTRL} can be solved efficiently when having access to the optimization oracle $\cO$.
In order to solve the concave program of \Cref{eq:FTRL}, we start by rewriting the problem on the space of joint signaling schemes $\phi$. To do that, we need to introduce constraints that ensure that the joint signaling scheme $\phi$ is well-defined with respect to marginals $\varphi$ (see \Cref{lp:multiBayes} in \Cref{app:MultiTypeRepEfficient}).
Then, we compute the Lagrangian relaxation of the resulting problem. By noticing that the problem is concave, and that Slater's condition holds, we recover strong duality. 
Finally, we use KKT conditions to remove the exponentially-many variables $\phi$, and thereby obtaining a concave optimization problem with polynomially-many variables and exponentially-many constraints. Applying a similar procedure to the one we used for \Cref{lm:oracle1}, we can solve such problem via the ellipsoid algorithm by using the oracle $\cO$ of Definition~\ref{def:opt} as a separation oracle.

\begin{restatable}{lemma}{quadratic}\label{lem:oraclequadratic}
Given access to an optimization oracle $\mathcal{O}$, there exists a polynomial-time algorithm that solves the problem of \Cref{eq:FTRL}.
\end{restatable}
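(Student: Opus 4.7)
The plan is to extend the Lagrangian--ellipsoid approach of \Cref{lm:oracle1} to handle the quadratic regularizer and the sum of $t$ value functions in \Cref{eq:FTRL}. First, I would lift the objective to the space of joint signaling schemes by introducing one copy $\phi^\tau$ of the joint signaling scheme per past round $\tau \in [t]$, rewriting \Cref{eq:FTRL} as
\begin{equation*}
\max_{\varphi \in \Lambda,\,\{\phi^\tau \geq 0\}} \sum_{\tau \in [t]} \sum_{\theta \in \Theta,\,\va \in \cA} \mu_\theta \phi^\tau_\theta(\va)\fs_\theta(R(\va)) - \frac{1}{2\alpha}\lVert\varphi\rVert_2^2
\end{equation*}
subject to the marginalization constraints of Equation~\eqref{lp:multi1} applied at each reported type $k_{\tau,r}$, where $R(\va) \subseteq \rec$ denotes the set of receivers playing $a_1$ under $\va$. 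This lifted problem is concave, has polynomially-many $\varphi$-variables and the polynomial description of $\Lambda$ given by Equations~\eqref{eq:lambdaTRsingle1}--\eqref{eq:lambdaTRsingle3}, but involves exponentially-many $\phi^\tau$-variables.

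Next, I would dualize the marginalization constraints via Lagrange multipliers $\{x^\tau_{r,\theta,\hat a}\}$. The lifted objective is concave, the coupling constraints are affine, and Slater's condition holds (e.g., any menu lying in the relative interior of $\Lambda$ paired with a strictly-positive joint signaling scheme having the prescribed marginals is strictly feasible), so strong duality applies. Stationarity of the Lagrangian with respect to each $\phi^\tau_\theta(\va) \geq 0$ forces $\mu_\theta \fs_\theta(R(\va)) - \sum_r x^\tau_{r,\theta,a_r} \leq 0$ at the optimum; substituting back eliminates all the exponentially-many $\phi^\tau$-variables and leaves a problem in the polynomially-many variables $(\varphi, x)$ with exponentially-many constraints. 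After the binary-action simplification already used in \Cref{lm:oracle1}, the constraints take the clean form $\sum_{r \in R} x^\tau_{r,\theta} \geq \mu_\theta \fs_\theta(R)$ for every $\tau \in [t]$, $\theta \in \Theta$, and $R \subseteq \rec$, together with the polynomial description of $\Lambda$.

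Finally, I would run the ellipsoid method on the reformulated program, mirroring the argument used for \Cref{lm:oracle1}. The polynomially-many constraints of $\Lambda$ are checked directly. For every $\tau$ and $\theta$, separating over the exponentially-many $x$-constraints reduces to a single call to $\cO(\fs_\theta, -x^\tau_\theta/\mu_\theta)$: the returned set $R^\ast$ is a violated constraint iff $\fs_\theta(R^\ast) - \sum_{r \in R^\ast} x^\tau_{r,\theta}/\mu_\theta > 0$, and in that case the inequality indexed by $R^\ast$ provides the separating hyperplane. By \Cref{lem:dughmi}, $\cO$ runs in polynomial time for supermodular and anonymous monotone $\fs_\theta$. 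The Lipschitz property of \Cref{lm:cont}, combined with the strong concavity induced by the quadratic regularizer, supplies the parameters needed for the ellipsoid method's precision analysis. The main obstacle will be the careful justification of the KKT-based elimination of the $\phi^\tau$-variables (including the verification of Slater's condition in the regularized setting) together with the standard bit-complexity bookkeeping needed to turn the ellipsoid method's approximate output into a polynomial-time solver for the regularized program of \Cref{eq:FTRL}.
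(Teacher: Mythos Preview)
Your proposal is correct and follows essentially the same Lagrangian--ellipsoid route as the paper: lift to joint signaling schemes $\phi^\tau$, dualize, eliminate the exponentially-many $\phi$-variables via KKT stationarity, and solve the resulting convex program with the ellipsoid method using $\cO$ as the separation oracle for the exponentially-many constraints. The only real difference is that the paper dualizes \emph{all} constraints---including those of $\Lambda$ (the IC, persuasiveness, and simplex constraints)---so that stationarity with respect to $\varphi$ yields an explicit expression for $\varphi$ in terms of the dual variables, and the final program is a single convex minimization; by contrast, your partial dualization keeps $\varphi\in\Lambda$ explicit, so what remains is a saddle-point problem ($\min$ over $x$, $\max$ over $\varphi\in\Lambda$) rather than a single optimization in $(\varphi,x)$ as your write-up suggests. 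This is harmless---the inner concave QP over the polynomially-described $\Lambda$ is solvable in polynomial time, and the outer convex minimization in $x$ (with exponentially-many linear constraints separated by $\cO$) is then handled by the ellipsoid method---but you should state the min--max structure explicitly and explain how the optimal $\varphi$ is recovered from the saddle point.
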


As a consequence of Lemma~\ref{lem:dughmi}, Lemma~\ref{lm:oracle1} and Lemma~\ref{lem:oraclequadratic} we can conclude the following:

\begin{restatable}{theorem}{corollaryOracle}
	In settings in which receivers have binary actions, and the sender has a monotone, supermodular or a monotone, anonymous utility function, \Cref{alg:typeReporting} has polynomial per-iteration running time and guarantees
	\[
	R_T\le nd|A|\sqrt{mT}.
	\]
\end{restatable}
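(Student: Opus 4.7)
The statement is essentially a bookkeeping corollary: it combines the generic regret bound already proved for Algorithm~\ref{alg:typeReporting} with the efficient implementability of its two non-trivial steps (Line~4 and Line~5) under the structural assumptions on $\fs_\theta$. My plan is therefore to argue the running-time claim first and then inherit the regret bound verbatim from the preceding theorem.

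For the running time, I would proceed per-iteration. The only two steps of Algorithm~\ref{alg:typeReporting} that are not obviously polynomial are Line~4, which requires solving LP~\eqref{lp:multi} on the exponentially-large joint signaling space, and Line~5, which requires solving the strongly-concave FTRL program of Equation~\eqref{eq:FTRL} over the menu polytope $\Lambda$. By Lemma~\ref{lm:oracle1}, Line~4 reduces to polynomially many calls to the optimization oracle $\cO$; by Lemma~\ref{lem:oraclequadratic}, Line~5 likewise reduces to polynomially many calls to $\cO$. Finally, Lemma~\ref{lem:dughmi} guarantees that, under either the monotone supermodular or the monotone anonymous assumption on $\fs_\theta$, the oracle $\cO(\fs_\theta,w)$ itself admits a polynomial-time implementation. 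Composing the three reductions, each invocation of Lines~4--5 runs in time polynomial in $n, d, |A|, m$ and in $t$, which yields the per-iteration efficiency claim. The remaining steps of the algorithm (proposing $\varphi_t$, receiving reported types, updating running sums) are trivially polynomial.

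For the regret bound, there is nothing new to prove: the previous theorem establishes that, with $\alpha=\sqrt{m/T}$, Algorithm~\ref{alg:typeReporting} guarantees $R_T\le nd|A|\sqrt{mT}$, and this argument only uses the concavity of $g^{\kvec}$ (Lemma~\ref{lem:concave}) together with the $\sqrt{nd|A|}$-Lipschitz continuity of $g^{\kvec}$ (Lemma~\ref{lm:cont}), both of which are structural facts about the LP~\eqref{lp:multi} defining $g^{\kvec}$ and are independent of the computational implementation chosen for Lines~4--5. In particular, because Lemma~\ref{lm:oracle1} returns an \emph{exact} optimal solution of LP~\eqref{lp:multi} (up to the standard polynomial-precision caveat of the ellipsoid method), the played signaling schemes $\phi_t$ attain value $g^{\kvec_t}(\varphi_t)$ exactly, so the regret analysis transfers without loss.

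\textbf{Main obstacle.} No real obstacle arises in this corollary itself; the content is entirely packaged in Lemmas~\ref{lem:dughmi}, \ref{lm:oracle1} and~\ref{lem:oraclequadratic}. The one subtlety I would flag, should the proof be challenged, is the exactness issue: the ellipsoid method returns a solution within additive error $\epsilon$ in time polynomial in $\log(1/\epsilon)$, so strictly speaking Line~4 and Line~5 are implemented up to arbitrarily small additive error. This only introduces an $o(1)$ additive term in the regret, which is absorbed in the stated bound by choosing $\epsilon$ inverse-polynomial in $T$; I would note this in a single sentence rather than reworking the FTRL analysis.
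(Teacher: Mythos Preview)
Your proposal is correct and matches the paper's own proof essentially verbatim: invoke Lemma~\ref{lem:dughmi} to obtain a polynomial-time oracle, then Lemmas~\ref{lm:oracle1} and~\ref{lem:oraclequadratic} to implement Lines~4 and~5 efficiently, and finally inherit the regret bound from the preceding theorem. The ellipsoid-precision caveat you flag is likewise acknowledged (and dismissed) in the paper's proof of Lemma~\ref{lem:oraclequadratic}.
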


\section{Further Applications}

The main motivation for introducing the reduction from online problems with finite number of losses to online linear optimization of Section~\ref{sec:smallLosses} was to solve online Bayesian persuasion problems. In this section, we highlight two further applications of our framework beyond Bayesian persuasion.

\paragraph{Online Learning in Security Games}
\citet{balcan2015commitment} extended classic (one-shot) security games (see, \eg \citet{tambe2011security}) by introducing the problem of learning a no-regret strategy for the defender against an adversarial sequence of attackers. In their model, at each round $t$, the defender chooses a strategy $\xvec_t$, which is a distribution over $N$ targets. Then, an attacker of type $d_t\in D$, best responds to such strategy and the defenders experience a loss of $L_{d_t}(\xvec_t)$. Our reduction yields a $\tilde O(\poly(D)\sqrt{T})$ regret bound under partial feedback, which improves the regret bound given in~\citet{balcan2015commitment}, which is of order $ O(\poly(ND)T^{2/3})$.

\paragraph{Online Bidding in Combinatorial Auction}
\citet{daskalakis2022learning} studied online learning in repeated combinatorial auctions.
In these auctions the action space is combinatorial and, therefore, exponentially large. However, \citet{daskalakis2022learning} show that whenever the different number of bid profiles of the other bidders is finite and small (of size $D$), it is possible to design $O(\sqrt{DT})$ regret algorithms under \emph{full feedback}. Our reduction to online linear optimization allows us to match their bound with full-information feedback, and also gives a $\tilde O(\poly(D)\sqrt{T})$ bound for the more realistic case of \emph{partial feedback}, \ie each player only observes their own utility.
%
%
%

\clearpage

\bibliographystyle{plainnat}
\bibliography{biblio}

\clearpage

\newpage
\appendix
\onecolumn
\section{Proofs Omitted from Section~\ref{sec:smallLosses}}

\generalLinear*
\begin{proof}
	First, notice that, given any $\zvec_t\in\co \vnu (\cX)$, thanks to Carathéodory's theorem there always exist $D+1$ points $\{\zvec^1_t,\ldots, \zvec^{D+1}_t\}\subset \vnu(\cX)$ and a corresponding probability distribution $\vlambda = (\lambda^1_t,\ldots, \lambda^{D+1}_t)\in\Delta^{D+1}$ such that ${\zvec}_t=\sum_{i=1}^{D+1}\lambda^i_t \, \zvec_t^i$.
	Such points $\zvec_t^i$ with their corresponding probabilities $\lambda_t^i$ are those returned by the procedure $\textsc{Carath\'eodory}(\zvec_t, \vnu(\cX))$ called by Algorithm~\ref{alg:generalLinear_new}.
	Thus, given how the algorithm selects the $\vx_t\in \cX$ to be played at each $t \in [T]$, it holds $\E \left[ L_{d_t}(\vx_t) \right] = \vnu(\vx_t)^\top \onevec_{d_t} =  \zvec_t^\top \onevec_{d_t}$.
	
	Second, by using the no-regret property of the regret minimizer $\fR$, the following holds:
	\begin{align*}
		R_T & = \sum_{t=1}^T\Exp [L_{d_t}(\vx_t)]-\min_{\vx \in\cX}\sum_{t=1}^T L_{d_t}(\vx)\\
		&=\sum_{t=1}^T \zvec_t^\top\onevec_{d_t}-\min_{\zvec\in\vnu (\cX)}\sum_{t=1}^T\zvec^\top\onevec_{d_t}\\
		&\le\sum_{t=1}^T \zvec_t^\top\onevec_{d_t}-\min_{\zvec\in\co \vnu(\cX)}\sum_{t=1}^T\zvec^\top\onevec_{d_t}\\
		&\le R_T^{\fR}(\co\vnu(\cX))
	\end{align*}
	where the first inequality holds since $\vnu(\cX) \subseteq \co \vnu(\cX)$.
\end{proof}

\LinearBanditoverPolytope*

\begin{proof}
	If $\cX$ is a polytope and $\vnu$ is a linear map then $\vnu(\cX)$ is a polytope, and thus elements of $\co \vnu(\cX)$ correspond to elements of $ \vnu(\cX)$. Therefore, the $\textsc{Carath\'eodory}$ oracle can be implemented as just returning the one point density at $\zvec$ for every $\zvec\in\co \vnu(\cX)$. 
	
	Moreover, since $\vnu$ is linear we can implement $\vnu^\dagger$ by computing a generalized inverse of its matrix representation $\mM$, and produce $\vnu^\dagger(\zvec)=\mM^\dagger \zvec\in\cX$. By definition of generalized inverse that holds for all $\zvec\in\vnu(\cX)$, \ie there exists an $\xvec$ such that $\mM\xvec=\zvec$, we have that 
	\[
	\vnu(\vnu^\dagger(\zvec))=\mM\mM^\dagger\zvec=\mM\mM^\dagger \mM\xvec=\mM\xvec=\zvec,
	\]
	which concludes the proof.
\end{proof}

\FullFeedback*
\begin{proof}
	We can set $\fR$ to be Online Gradient Descent (OGD)~\citep{zinkevich2003online}. Indeed, we have that the gradient of the losses in $\co\vnu(\cX)$ is bounded by 1 in the $\ell_2$-norm, and that $\co\vnu(\cX)\subset[0,1]^D$, which gives a $D$ bound on the diameter w.r.t. the the $\ell_2$-norm. Thus, by setting the learning rate of OGD as $\sqrt{D/T}$ we obtain a regret bound of $R_T^{\fR}(\co \vnu(\cX))\le \sqrt{DT}$~\citep{Orabona}.
\end{proof}

\PartialFeedback*
\begin{proof}
	Under partial feedback, we obtain the regret bound above by equipping Algorithm~\ref{alg:generalLinear_new} with a suitably-defined regret minimizer $\fR$.
	In particular, $\fR$ must work by observing only realizations of an unbiased estimator of $\zvec_t^\top \onevec_{d_t}$ instead of its actual value, since Algorithm~\ref{alg:generalLinear_new} does \emph{not} play $\zvec_t$, but it employs a sampling process that is equivalent to playing $\zvec_t$ in expectation.
	Such a regret minimizer $\fR$ can be implemented by the algorithm introduced by~\citet{abernethy2008competing}, as any polytope in $\R^D$ has a $D$-self concordant barrier~\citet[Theorem~2.5.1]{nesterov1994interior}. This yields
	$
	R_T^{\fR}(\co\vnu(\cX))\le16D^{3/2}(T\log T)^{1/2}
	$, which proves our statement.
\end{proof}

\section{Proofs Omitted from Section~\ref{sec:singleTypeRep}}

\typereportingsingle*
\begin{proof}
	By \Cref{cor:FullFeedback}, \Cref{alg:generalLinear_new} produces a sequence $(\ell_t)_{t=1}^T,\ell_t\in\cL$, such that
	\[
	\sum\limits_{t=1}^T \vnu(\ell_t)^\top\onevec_{k_t}-\min\limits_{\ell\in\cL}\sum\limits_{t=1}^T\vnu(\ell)^\top\onevec_{k_t}\le \sqrt{mT}.
	\]
	Then, the sender commits to the menu which is the projection of $\ell_t$ onto $\Lambda$, \ie $\varphi_t=\pi(\ell_t)$. Since $\vnu(\ell)$ is independent from the extension variables $l$ we get that:
	\[
	\vnu(\ell_t)^\top\onevec_{k_t}=-\us(\pi(\ell_t), k_t)=-\us(\varphi_t, k_t)
	\]
	and similarly
	$
	\vnu(\ell)^\top\onevec_{k_t}=-\us(\varphi, k_t),
	$
	which proves the statement. 
\end{proof}
\section{Proofs Omitted from Section~\ref{sec:multiTypeRep}}\label{app:MultiTypeRep}

\begin{lemma}\label{lem:bersimas}
	For any $\varphi\in\Lambda$ we can write $g^\kvec(\varphi)$ as a solution of a standard-form linear program with $|\A|\cdot|\Theta|$ variables and constraints, and in such a standard-form linear program, the variables $\varphi$, are its \emph{right-hand} side vector.
\end{lemma}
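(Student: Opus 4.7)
The plan is to take LP~\eqref{lp:multi} essentially as-is and verify that, after a routine rewriting, it already fits the standard form $\max c^\top \phi$ subject to $A\phi = b(\varphi)$, $\phi \ge 0$, in which only the right-hand side depends on $\varphi$. Once this is established, all properties of $g^{\kvec}(\varphi)$ that depend on its right-hand-side parametrization (in particular the concavity and Lipschitz-continuity claims in \Cref{lem:concave} and \Cref{lm:cont}) can be obtained through standard parametric-LP sensitivity analysis, which is the reason the lemma is stated in this form.

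Concretely, I would index the decision variables by pairs $(\va,\theta)\in\A\times\Theta$, so that $\phi\in\R^{|\A|\cdot|\Theta|}_{\ge 0}$ has exactly $|\A|\cdot|\Theta|$ components. The objective of~\eqref{lp:multi} then reads $c^\top\phi$ with the constant vector $c_{(\va,\theta)}\defeq \mu_\theta\,\us_\theta(\va)$; crucially, $c$ does \emph{not} depend on $\varphi$ (nor on $\kvec$, which enters only through the right-hand side).

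Next, I would collect the marginal constraints~\eqref{lp:multi1} into a single matrix equation. For each triple $(r,\hat a,\theta)\in\rec\times A\times\Theta$, the constraint reads
\[
\sum_{\va\in\A:\,a_r=\hat a} \phi_\theta(\va) \,=\, \varphi^{r,k_r}_\theta(\hat a),
\]
which contributes a row of $A$ with $\{0,1\}$-entries determined purely by the index combinatorics (the entry corresponding to variable $(\va,\theta')$ is $1$ exactly when $\theta'=\theta$ and $a_r=\hat a$), together with a scalar right-hand side $b_{(r,\hat a,\theta)}(\varphi)\defeq \varphi^{r,k_r}_\theta(\hat a)$. Hence $A$ is independent of $\varphi$ and $b(\varphi)$ is a \emph{linear} selection of components of $\varphi$. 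This exhibits~\eqref{lp:multi} in the claimed standard form, with $\varphi$ appearing solely as the right-hand-side vector.

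It remains to reconcile the dimension count in the statement. The decision vector has exactly $|\A|\cdot|\Theta|$ coordinates by construction. The number of equality rows is $n\,|A|\,|\Theta|$, which is at most $|\A|\cdot|\Theta|$ whenever $|A|\ge 2$ (since then $n\le|A|^{n-1}$); if an exact match is desired, one may pad with trivial $0=0$ rows to reach $|\A|\cdot|\Theta|$ constraints without altering the LP. I do not expect a substantive obstacle here: the only point worth double-checking is that neither $c$ nor $A$ has any hidden dependence on $\varphi$ or on the realized type profile $\kvec$, which is manifest from the explicit formulas above and is precisely what justifies invoking sensitivity results in the downstream proofs.
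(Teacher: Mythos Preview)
Your approach is essentially identical to the paper's: you index the decision variables by $(\va,\theta)$, read off $c_{(\va,\theta)}=\mu_\theta\,\us_\theta(\va)$, and index the equality rows by $(r,\hat a,\theta)$ with the right-hand side $\varphi^{r,k_r}_\theta(\hat a)$, exactly as in the paper's explicit construction via the bijections $\pi_1$ and $\pi_2$. Your treatment of the constraint-count discrepancy (padding $n|A||\Theta|$ rows up to $|\A|\cdot|\Theta|$) is in fact more careful than the paper's proof, which writes out $n|A||\Theta|$ rows without reconciling them with the $|\A|\cdot|\Theta|$ figure in the statement; since the downstream use (concavity in the right-hand side via \citet[Theorem~5.1]{bertsimas1997introduction}) does not require a square constraint matrix, this is cosmetic either way.
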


\begin{proof}
	We define a standard form linear program with $n$ variables and $n$ constraints if it is of the form:
	\begin{align*}
		&\max\limits_{\xvec}\vc^\top\xvec,\,s.t.\\
		&\mA\xvec=\vb, \xvec\ge 0,
	\end{align*}
	where $\xvec,\vb,\vc\in\R^n$ and $\mA\in\R^{n\times n}$.
	We define two one-to-one mappings $\pi_1:|\A|\times|\Theta|\to[|\A|\cdot|\Theta|]$ and $\pi_2:|\cR|\times|A|\times|\Theta|\to[|\cR|\cdot|A|\cdot|\Theta|]$ such that $\pi_1(\cdot)$ associate every tuple of actions $\avec$ and state of nature $\theta$ to the index $\pi_1(\avec,\theta)$, while $\pi_2(\cdot)$ associate every receiver $r$, action $ a\in A$ and state of nature $\theta$ to the index $\pi_2(r, a, \theta)$. 
	Then we can define $i:=\pi_1(\avec,\theta)$ and $j:=\pi_2(r, a, \theta^\prime)$ so that:
	\begin{itemize}
		\item $\xvec[i]:=\phi_\theta(\avec)$
		\item $\vc[i]:=\mu_\theta\cdot \us(\avec,\theta)$
		\item $\vb[j]:=\varphi^{r,k_r}_\theta( a)$
		\item $\mA[j,i]:=\mathbb{I}(a_r= a,\theta=\theta^\prime)$.
	\end{itemize}
	
	Then we can write LP~\ref{lp:multi} as $\max_{\xvec}\vc^\top \xvec$ subject to $\xvec\ge 0$ and $\mA\xvec=\vb$. We note that the variables $\varphi$ only appear in the right-hand side vector $\vb$ in the standard-form linear program above.
\end{proof}

\lemmaConcave*

\begin{proof}
	Let $\kvec \in \K$ be a tuple of types. 
	Lemma~\ref{lem:bersimas} relates the solution $g^\kvec(\varphi)$ of LP~\ref{lp:multi} to the solution of a standard-form linear program in which $\varphi$ is the right-hand side vector of an equality constraint. Thus, for every fixed $\kvec$, the function $g^\kvec(\varphi)$ is known to be concave in $\varphi$~\citep[Theorem~5.1]{bertsimas1997introduction}.
\end{proof}

\lemmaCont*

\begin{proof}
	First we note that for any fixed tuple of types $\kvec$, the menus $\varphi^{r,k}_\theta$ for $k\neq k_r$ do not appear, thus, in this proof, we can ease the notion by dropping $k_r$ from $\varphi^{r,k_r}_\theta$, which will be denoted by just $\varphi^{r}_\theta$.
	
	Then, for ease of clarity, we define 
	\[
	o^\kvec(\phi):=\sum_{\theta \in \Theta} \sum_{\va \in \A} \mu_\theta \phi_\theta(\va) \us(\va,\theta),
	\]
	and
	\[
	\cM^\kvec(\varphi):=\left\{\phi\,\Bigg\vert\,\sum_{\substack{\va \in \A:\\ a_r\in \hat a}} \phi_\theta(\va)  = \varphi^{r}_\theta(\hat a),\, \forall r \in \rec, \hat a \in \A_r, \theta \in \Theta\right\},
	\]
	which are the objective function and the constraints polytope of LP~\ref{lp:multi}, respectively. Formally, it holds that 
	\[
	g^\kvec(\varphi)=\max\limits_{\phi\in\cM^\kvec(\varphi)}o^\kvec(\phi).
	\]
	We will also use the function $\pi_2:\cR\times\A\times\Theta\to[|\cR|\cdot|\A|\cdot|\Theta|]$ introduced in Lemma~\ref{lem:bersimas}, that associate for every $(\hat r,\hat a,\hat \theta)\in\cR\times \A_r\times\Theta$ an index $i=\pi_2(\hat r,\hat a,\hat \theta)$.
	We first prove the $1$-Lipschitzness of $g^\kvec$ w.r.t. to $\|\cdot\|_1$. Consider any two $\varphi,\overline\varphi\in\Lambda$. 
	
	Let then $\phi\in\arg\max_{\phi^\prime\in\cM^\kvec(\varphi)}o^\kvec(\phi^\prime)$ and $\overline\phi\in\arg\max_{\phi^\prime\in\cM^\kvec(\overline\varphi)}o^\kvec(\phi^\prime)$ the values of the solutions of LP~\ref{lp:multi} w.r.t. $\varphi$ and $\overline\varphi$, respectively. 
	
	The idea of the proof is to construct a new variable $\varphi^\star$ and $\phi^\star$ that satisfies the following conditions:
	\begin{enumerate}
		\item $\phi^\star\in\cM^\kvec(\phi^\star)$
		\item $0\preceq\varphi^\star\preceq\overline\varphi$, which has to be interpreted element-wise.
		\item $\|\varphi-\overline\varphi\|_1+o^\kvec(\phi^\star)\ge o^\kvec(\phi)$.
	\end{enumerate}	
	Note that we do not require that $\varphi^\star\in\Lambda$.
	Assume that we can have such a $\varphi^\star$ and $\phi^\star$ then we can easily prove $1$-Lipschitzness w.r.t. $\|\cdot\|_1$ as follows:
	\begin{align*}
		g^\kvec(\overline\varphi)&\ge o^\kvec\left(\overline\phi\right)\\
		&\ge o^\kvec(\phi^\star)\\
		&\ge o^\kvec(\phi)-\|\varphi-\overline\varphi\|_1\\
		&=g^\kvec(\varphi)-\|\varphi-\overline\varphi\|_1,
	\end{align*}
	where the first inequality holds since $\overline\varphi-\varphi^\star\succeq0$ by assumption and thus $\overline\phi\succeq\phi^\star$ which implies that $o^\kvec(\overline\phi)\ge o^\kvec(\phi^\star)$, and the second inequality holds by assumption on $\varphi^\star$.
	This in turn implies that $|g^\kvec(\overline\varphi)-g^\kvec(\varphi)|\le\|\varphi-\overline\varphi\|_1$ since the construction is symmetric w.r.t. $\varphi$ and $\overline\varphi$.
	After we prove that $|g^\kvec(\overline\varphi)-g^\kvec(\varphi)|\le\|\varphi-\overline\varphi\|_1$ we can easily conclude the proof by observing that $\|\varphi-\overline\varphi\|_1\le\sqrt{nd|\A_r|}\cdot\|\varphi-\overline\varphi\|_2$.

	Now we show the existence such a $\varphi^\star$ and the related $\phi^\star\in\cM^\kvec(\varphi^\star)$ by explicitly building it iteratively as follows. The procedure above maintains variables $(\varphi^t,\phi^t)$ that is updated as detailed in Algorithm~\ref{alg:iterative}.
	
	\begin{algorithm}[H]\caption{\textsc{}}\label{alg:iterative}
		\begin{algorithmic}[1]
			\STATE $\varphi^0\gets\varphi$
			\STATE $\phi^0\gets\phi$
			\STATE $T\gets |\cR|\cdot|\A_r|\cdot  |\Theta|$
			\STATE $\tilde\varphi\gets\min(\overline\varphi,\varphi)$
			\FOR{ $t=1$ to $T$}
			\STATE $(\hat r,\hat a, \hat \theta)\gets\pi_2^{-1}(t)$
			\STATE $\delta_t\gets\varphi_{\hat \theta}^{t-1,\hat r}(\hat a)-\tilde\varphi_{\hat \theta}^{\hat r}(\hat a)$
			\STATE $\varphi^{t}\gets\varphi^{t-1}$
			\STATE$\phi^{t}\gets\phi^{t-1}$
			\IF{$\varphi_{\hat\theta}^{t-1,\hat r}(\hat a)\ge\tilde\varphi_{\hat \theta}^{\hat r}(\hat a)$}
			
			\STATE $\varphi^{t,\hat r}_{\hat \theta}(\hat a)\gets\tilde\varphi_{\hat \theta}^{\hat r}(\hat a)$
			\STATE $\varphi^{t,r^\prime}_{\hat \theta}(a^\prime)\gets\varphi_{\hat \theta}^{t-1,\hat r}( a^\prime)-\frac{\delta_t}{\varphi^{t-1,\hat r}_{\hat \theta}(\hat a)}\sum_{\avec\in\A:a_r=\hat a, a_{r^\prime}=a^\prime}\phi_{\hat \theta}^{t-1}(\avec),\,\forall r^\prime\neq \hat r,a^\prime\in\A_{r^\prime}$
			\STATE $\phi_{\hat \theta}^t(\avec)\gets \phi_{\hat \theta}^{t-1}(\avec)\left(1-\frac{\delta_t}{\varphi_{\hat \theta}^{t-1,\hat r}(\hat a)}\right),\,\forall\avec: a_r=\hat a$
			\ENDIF
			\ENDFOR
			\STATE\textbf{return} $\varphi^\star:=\varphi^T,\phi^\star:=\phi^T$
		\end{algorithmic}
	\end{algorithm}
	
	The idea of the procedure in Algorithm~\ref{alg:iterative}, is to maintain the constraints $\phi^t\in\cM^\kvec(\varphi^t)$ valid trough tout the procedure, and to update $\phi^t$ as to guarantee that $o^\kvec(\phi^t)\ge o^\kvec(\phi^{t-1})-\delta_t$.
	
	Now we see that the constraints $\phi^t\in\cM^\kvec(\varphi^t)$ are maintained at iteration $t$, assuming that are satisfied at time $t-1$. 
	
	Define $(\hat r,\hat \theta,\hat a)=\pi_2^{-1}(t)$ and consider the following two cases:

	• If $\varphi_{\hat \theta}^{t-1,\hat r}(\hat a)\le\tilde\varphi_{\hat \theta}^{\hat r}(\hat a)$:
	
	Then we trivially have that $\phi^t\in\cM^\kvec(\varphi^t)$ as $\phi^t=\phi^{t-1}$ and $\varphi^t=\varphi^{t-1}$ and $\phi^{t-1}\in\cM^\kvec(\varphi^{t-1})$ by assumption.
	
	• If otherwise $\varphi_{\hat \theta}^{t-1,\hat r}(\hat a)\ge\tilde\varphi_{\hat \theta}^{\hat r}(\hat a)$. We can divide the variables $(r,a,\theta)\in\cR\times\A_r\times \Theta$ into three sets
	\begin{enumerate}[label=\alph*)]
		\item $A_1:=\{(r,\theta,\hat a)\}$
		\item $A_2:=\{(r,\theta,a):a\in\A_r, a\neq\hat a\}$
		\item $A_3:=\{(r^\prime,a^\prime,\hat \theta):r^\prime\in\cR/\{\hat r\}, a^\prime\in\A_{r^\prime}\}$
		\item $A_4:=\{(r,a,\theta^\prime):\theta^\prime\in\Theta,\theta^\prime\neq\hat\theta\}$
	\end{enumerate}
	Notice that these sets are disjoint and their union is $\cR\times\A_r\times \Theta$.
	
	\textbf{a)} For any $(r,a,\theta)\in A_1$ we have:
	\begin{align*}
		\sum\limits_{\avec\in\A:a_r= a}\phi^t_\theta(\avec)&=\sum\limits_{\avec\in\A:a_{ r}= a} \phi_{ \theta}^{t-1}(\avec)\left(1-\frac{\delta_t}{\varphi_{ \theta}^{t-1, r}( a)}\right)\\
		&=\varphi_\theta^{t-1}( a)\left(1-\frac{\delta_t}{\varphi_\theta^{t-1}( a)}\right)\\
		&=\varphi^{t-1}_\theta( a)-\delta_{t}\\
		&=\tilde\varphi_\theta^{r}( a).
	\end{align*}
	
	\textbf{b)} For any $(r,a,\theta)\in A_2$ we have:
	\begin{align*}
		\sum\limits_{\avec\in\A:a_r= a^\prime}\phi^t_\theta(\avec)&=\sum\limits_{\avec\in\A:a_r= a^\prime}\phi^{t-1}_\theta(\avec)=\varphi^{t-1,r}_\theta(a^\prime)=\varphi_\theta^{t,r}(a^\prime).
	\end{align*}
	as the those variable are not updated at round $t$.

	\textbf{c)}  For any $(r,a,\theta)\in A_3$ we have:
	\begin{align*}
		\sum\limits_{\avec\in\A:a_{r}=a}\phi^{t}_{\theta}(\avec)&=\sum\limits_{\substack{\avec\in\A:\\a_{ r}=a\\
				a_{\hat r }=\hat a}}\phi^{t}_{\theta}(\avec)+\sum\limits_{\substack{\avec\in\A:\\a_{r}=a\\
				a_{\hat r}\neq\hat a}}\phi^{t}_{\theta}(\avec)\\	
		&=\sum\limits_{\substack{\avec\in\A:\\a_{r}^{}=a\\
				a_{\hat r}=\hat a}}\phi^{t-1}_{\theta}(\avec)\left(1-\frac{\delta_t}{\varphi_\theta^{t-1,\hat r}(\hat a)}\right)+\sum\limits_{\substack{\avec\in\A:\\a_{r}=a\\
				a_{\hat r}\neq\hat a}}\phi^{t-1}_{\theta}(\avec)\\	
		&=\sum\limits_{\substack{\avec\in\A:\\a_{r}^{}=a}}\phi^{t-1}_{\theta^\prime}(\avec)-\frac{\delta_t}{\varphi_\theta^{t-1,\hat r}(\hat a)}\sum\limits_{\substack{\avec\in\A:\\a_{r}^{}=a\\
				a_{\hat r}=\hat a}}\phi^{t-1}_{\theta}(\avec)\\
		&=\varphi_{\theta}^{t,r}(a),
	\end{align*}
	where for the second equality we used the update of update of $\phi^{t-1}(\avec)$ in Line 13 of Algorithm~\ref{alg:iterative}. While the last equality follows from the update of Line 12.
	
	\textbf{d)}  For any $(r,a,\theta)\in A_4$ we have that none of the variable are updated an thus the statement holds by inductive assumption.
	
	This proves that $\phi^\star\in\cM^\kvec(\varphi^\star)$.
	
	On the other hand it is evident that $\varphi^\star\preceq \overline\varphi$ thanks to update of Line 11 in Algorithm~\ref{alg:iterative}. In particular it also holds that $\varphi_{ \hat\theta}^{t,\hat r}(\hat a)\le\overline\varphi_{ \hat\theta}^{\hat r}(\hat a)$ for all $t=\pi_2(\hat r,\hat a,\hat \theta)$.
	
	We are left to show that $\|\varphi-\overline\varphi\|_1+o^\kvec(\phi^\star)\ge o^\kvec(\phi)$. Consider the following inequalities:
	\begin{align*}
		o^\kvec(\phi^t)&:=\sum_{\theta \in \Theta} \sum_{\va \in \A} \mu_\theta \phi^t_\theta(\va) \us(\va,\theta)\\
		&=\mu_{\hat \theta} \sum_{\va \in \A}  \phi^t_{\hat \theta}(\va) \us(\va,\hat \theta)+\sum_{\substack{\theta \in \Theta/\{\hat\theta\}}} \sum_{\va \in \A} \mu_\theta \phi^t_\theta(\va) \us(\va,\theta)\\
		&=\mu_{\hat \theta} \sum_{\va \in \A}  \phi^t_{\hat \theta}(\va) \us(\va,\hat \theta)+\sum_{\substack{\theta \in \Theta/\{\hat\theta\}}} \sum_{\va \in \A} \mu_\theta \phi^{t-1}_\theta(\va) \us(\va,\theta)\\
		&=\mu_{\hat \theta} \sum_{\substack{\va \in \A:\\a_{\hat r}=\hat a}}  \phi^t_{\hat \theta}(\va) \us(\va,\hat \theta)+\mu_{\hat \theta} \sum_{\substack{\va \in \A:\\a_{\hat r}\neq\hat a}}  \phi^t_{\hat \theta}(\va) \us(\va,\hat \theta)+\sum_{\substack{\theta \in \Theta/\{\hat\theta\}}} \sum_{\va \in \A} \mu_\theta \phi^{t-1}_\theta(\va) \us(\va,\theta)\\
		&=\mu_{\hat \theta}\left(1-\frac{\delta_t}{\varphi_{\hat\theta}^{t-1,\hat r}(\hat a)}\right) \sum_{\substack{\va \in \A:\\a_{\hat r}=\hat a}}  \phi^{t-1}_{\hat \theta}(\va) \us(\va,\hat \theta)+\mu_{\hat \theta} \sum_{\substack{\va \in \A:\\a_{\hat r}\neq\hat a}}  \phi^{t-1}_{\hat \theta}(\va) \us(\va,\hat \theta)+\sum_{\substack{\theta \in \Theta/\{\hat\theta\}}} \sum_{\va \in \A} \mu_\theta \phi^{t-1}_\theta(\va) \us(\va,\theta)\\
		&=\sum_{\theta \in \Theta} \sum_{\va \in \A} \mu_\theta \phi^{t-1}_\theta(\va) \us(\va,\theta)-\frac{\delta_t}{\varphi_{\hat\theta}^{t-1,\hat r}(\hat a)}\mu_{\hat\theta}\sum_{\substack{\va \in \A:\\a_{\hat r}=\hat a}}  \phi^{t-1}_{\hat \theta}(\va) \us(\va,\hat \theta)\\
		&\ge o^\kvec(\phi_\theta^{t-1})-\frac{\delta_t}{\varphi_{\hat\theta}^{t-1,\hat r}(\hat a)}\sum\limits_{\substack{\avec\in\A:\\a_{\hat r}=\hat a}}\phi_{\hat\theta}^{t-1}(\avec)\\
		&=o^{\kvec}(\phi_\theta^{t-1})-\delta_t.
	\end{align*}
	Then we can telescope the inequality to show that:
	\[
	o^\kvec(\phi^\star)\ge o^\kvec(\phi)-\sum\limits_{t=1}^T\delta_t.
	\]
	
	Then it is easy to show that $\delta_t=\varphi_{\hat \theta}^{t-1,\hat r}(\hat a)-\tilde\varphi_{\hat \theta}^{\hat r}(\hat a)\le \overline\varphi_{\hat \theta}^{\hat r}(\hat a)-\tilde\varphi^{\hat r}_{\hat \theta}(\hat a)\le | \overline\varphi_{\hat \theta}^{\hat r}(\hat a)-\tilde\varphi^{\hat r}_{\hat \theta}(\hat a)|$ and thus 
	\[
	o^\kvec(\phi^\star)\ge o^\kvec(\phi)-\|\varphi-\overline\varphi\|_1,
	\]
	as wanted.
\end{proof}

\regretMultiType*

\begin{proof}
	First notice that:
	\[
	\max\limits_{\varphi\in\Lambda}\sum\limits_{t=1}^Tg^{\kvec_t}(\varphi)=\max\limits_{\varphi\in\Lambda}\sum\limits_{t=1}^T\us(\varphi,\kvec_t)
	\]
	which follows from the definition of $\us(\varphi,\kvec)$ given in Equation~\eqref{eq:SenderUtilityOnMenu}.
	On the other hand it is clear that $g^{\kvec_t}(\varphi_t)=\us(\phi_t,\kvec_t)$ thanks to the update of Line~4 of Algorithm~\ref{alg:typeReporting}.
	
	Thus we can write the regret of Algorithm~\ref{alg:typeReporting} as:
	\[
	R_T=\max\limits_{\varphi\in\Lambda}\sum\limits_{t=1}^Tg^{\kvec_t}(\varphi)-\sum\limits_{t=1}^Tg^{\kvec_t}(\varphi_t).
	\] 
	Then, by Lemma~\ref{lem:concave} we have that the reward functions $g^\kvec(\cdot)$ are concave for all $\kvec \in \K$.
	
	Moreover we know that the by Lemma~\ref{lm:cont} that for all $\kvec \in \K$ the functions $g^\kvec(\cdot)$, are $\sqrt{nd|A|}$-Lipschitz w.r.t. $\|\cdot\|_2$ and thus, by~\citet[Lemma~2.6]{shalev2012online}, we have that all the subgradients of $-g^\kvec(\cdot)$ have norm bounded by the Lipschitz constant. This clearly implies $G:=\sup_{\varphi\in\Lambda}\|\partial g^\kvec(\varphi)\|_2\le\sqrt{nd|A|}$.
	
	Moreover, the regularizer $\frac{1}{2}\|\cdot\|_2^2$ is trivially $1$-strongly convex w.r.t. $\|\cdot\|_2$. 
	
	Finally we have that the diameter of the polytope $\Lambda$, induced by the regularizers is bounded by $\frac{1}{2}ndm|A|$, as $\Lambda$ is a contained in the $ndm|A|$-dimensional hypercube. Formally $D:=\sqrt{\max_{\phi \in \Lambda} \frac{1}{2}\lVert\phi\rVert_2^2-\min_{\phi^\prime\in\Lambda}\frac{1}{2}\lVert\phi'\rVert_2^2}\le \sqrt{\frac{1}{2} ndm|A|}$.
	
	A standard application of~\citet[Corollary~7.9]{Orabona} gives a bound of:
	\[
	R_T\le \frac{D^2}{\alpha}+\frac{1}{2}\alpha G^2 T\le \frac{1}{2\alpha}ndm|A|+\frac{1}{2}\alpha nd|A| T.
	\]
	Setting $\alpha=\sqrt{m/T}$ gives the result.
	%
\end{proof}

\section{Proofs Omitted from Section~\ref{sec:multiTypeRepEfficient}}\label{app:MultiTypeRepEfficient}

\ellipsoidEasy*

\begin{proof}
	We defining for any $R\subset\rec$, $\avec_R$ as the tuple in which action $a_1$ is recommended to all the receivers in $R$ and $a_0$ to the others. Formally $a_r=a_1$ for all $r\in R$, and $a_r=a_0$ for all $r\in \rec/ R$. Then , rewriting LP~\ref{lp:multi} for the specific case of binary actions per receiver, we obtain:
	\begin{subequations}
		\begin{align}
			\max_{\phi \ge 0}  &\sum_{\theta \in \Theta} \sum_{R \subseteq \rec} \mu_\theta\phi_\theta(\avec_R) \fs_\theta(R) \quad \text{s.t.} \label{obj:lpMultiTwo} \\
			& \sum_{\substack{R \in \rec:r \in R}} \phi_\theta(\avec_R)  = \varphi^{r,k_r}_\theta(a_1),\quad  \forall r \in \rec, \forall \theta \in \Theta \label{lp:multiTwo1} \\
			&\sum_{R \in \rec} \phi_\theta(\avec_R)  = 1,\quad \forall \theta \in \Theta \label{lp:multiTwo2} 
		\end{align}
	\end{subequations}
	%
	%
	%
	The dual of such LP reads as follows:
	\begin{align*}
		\min_{x} &  \sum_{r \in \rec, \theta \in \Theta} \varphi^{r,k_r}_\theta(a_1) x_{r,\theta}\quad \textnormal{s.t.}\\
		&\sum_{r\in R} x_{r,\theta} \ge \mu_\theta \fs_{\theta}(R),\quad \forall R\subseteq \rec,\theta \in \Theta,
	\end{align*}
	where the dual variables are $\{x_{r,\theta}\}_{r \in \rec,\theta \in \Theta}$.
	A separation oracle for dual problem can be implemented exploiting the optimization oracle $\mathcal{O}(\fs_\theta, -x_{\theta}/\mu_\theta)$ for each $\theta$.
	If,for at least one $\theta$, the value of $\mathcal{O}(\fs_\theta, -x_{\theta}/\mu_\theta)$ is larger that $0$ then we can use the the violated constraint as a separating hyperplane.
\end{proof}

\quadratic*

\begin{proof}

	We defining for any $R\subset\rec$, $\avec_R$ as the tuple in which action $a_1$ is recommended to all the receivers in $R$ and $a_0$ to the others. Formally $a_r=a_1$ for all $r\in R$, and $a_r=a_0$ for all $r\in \rec/ R$.
	With this definition, for any sequence of type's tuples, the problem $ \max_{\varphi \in \Lambda}\sum_{\tau \in [t]} g^{\kvec_\tau}(\varphi)-\frac{1}{2\alpha} \lVert\varphi\rVert_2^2$ can be rewritten as:
	\begin{subequations}\label{lp:multiBayes}
		\begin{align}
			\max_{\phi \ge 0, \varphi \in \Lambda}   \sum_{	\substack{\tau \in [t]\\ \theta \in \Theta\\R \subseteq \rec}} &\mu_\theta  \phi_{\tau,\theta}(\avec_R) \fs_\theta(R) - \frac{1}{2\alpha} \sum_{ \substack{r \in \rec,k \in \K_r,\\\theta \in \Theta, a \in A}} \varphi^{r,k}_\theta(a)^2 \quad \text{s.t.} \\
			& \hspace{-1cm} \sum_{\substack{R \subseteq \rec:\\r \in R}} \phi_{\tau,\theta}(\avec_R)  = \varphi^{r,k_{\tau,r}}_\theta(a_1),\quad \forall r \in \rec, \theta \in \Theta, \tau \in [t] \\
			&  \hspace{-1cm} \sum_{R\subseteq \rec} \phi_{\tau,\theta}(\avec_R) = 1,\quad  \forall \tau \in [t] ,\theta \in \Theta\label{lp:multiBayes2}
		\end{align}
	\end{subequations}
	We Lagrangyfing Problem~\eqref{lp:multiBayes} by introducing the following dual variables
	\begin{itemize}
		\item $x_{r,\theta,\tau} \in \mathbb{R}$ for each $r\in\rec,\theta \in \theta$, $\tau\in [t]$, which is the dual variable of the constrain~\ref{lp:multi1}
		\item   $y_{\tau,\theta} \in \mathbb{R} $ for each $\tau \in [t]$, $\theta \in \Theta$, which is the dual variable of the constrain~\ref{lp:multiBayes2}
		\item $z_{r,k,k'} \in\mathbb{R}_+$ for each $r \in \rec,k,k' \in \K_r$, which is the dual variable of the constrain~\ref{eq:lambdaTRsingle1}
		\item $\alpha_{r,k,k',a,a'} \in \mathbb{R}_+$ for each $r \in \rec,k, k^\prime \in \K_r,a,a^\prime\in A$, which is the dual variable of the constrain~\ref{eq:lambdaTRsingle2}
		\item  $\beta_{r,k,\theta} \in \mathbb{R}  $  for each $r \in \rec,  k \in \K_r, \theta \in \Theta$, which is the dual variable of the constrain~\ref{eq:lambdaTRsingle3}
		\item $\gamma_{\theta,R}  \in \mathbb{R}_+  $ for each $\theta \in \Theta$, $R \subseteq \rec$, for the constraint $\phi\ge 0$
		\item $\eta_{r,k,\theta,a} \in \mathbb{R}_+$ for each $r \in \rec$, $k \in \K_r$, $\theta \in \Theta$, and $a \in A$, for the constraint $\varphi\ge 0$
	\end{itemize}
	The the Lagrangian of Problem~\ref{lp:multiBayes} reads:
	\begin{align*}
		L(\phi,\varphi, x,y,z,\alpha,\beta,\gamma,\eta)
		&= \sum_{	\substack{\tau \in [t], \theta \in \Theta\\R \subseteq \rec}} \mu_\theta  \phi_{\tau,\theta}(\avec_R) \fs_\theta(R) - \frac{1}{2\alpha} \sum_{ \substack{r \in \rec,k \in \K_r,\\\theta \in \Theta, a \in A}} \varphi^{r,k}_\theta(a)^2\\
		&+\sum_{\substack{ \tau \in [t], \theta\in\Theta\\r \in \rec}} x_{r,\theta,\tau} \left(  \sum_{\substack{R \subseteq \rec:\\r \in R}} \phi_{\tau,\theta}(\avec_R)  - \varphi^{r,k_{\tau,r}}_\theta(a_1)\right)\\
		&+ \sum_{\substack{\tau \in [t], \\\theta \in \Theta}} y_{\tau,\theta} \left(\sum_{R \subseteq \rec} \phi_{\tau,\theta}(\avec_R)-1\right)\\ 
		&+ \sum_{\substack{r \in \rec, \\k,k' \in \K_r }} z_{r,k,k'} \left(\sum_{a \in A} \sum_{\theta \in \Theta} \mu_\theta \, \varphi_\theta^{r,k}(a) \, \ur[r][k](a,\theta) -\sum\limits_{a\in A}l_a^{r,k,k^\prime}\right)\\
		&+ \sum_{\substack{r \in \rec, k,k' \in \K_r,\\ a,a' \in A}} \alpha_{r,k,k',a,a'}\left(l_a^{r,k,k^\prime}-  \sum_{\theta \in \Theta} \mu_\theta \, \varphi_\theta^{r,k'}(a) \, \ur[r][k](a^\prime,\theta) \right)\\
		&+ \sum_{\substack{r \in \rec,\\ k \in \K_r, \theta \in \Theta}} \beta_{r,k,\theta} \left(\sum\limits_{a\in A}\varphi^{r,k}_\theta(a)-1 \right) + \sum_{\substack{\theta \in \Theta,\\ R \subseteq \rec}} \gamma_{\theta,R} \phi_\theta(\avec_R) + \sum_{\substack{r \in \rec, k \in \K_r, \\ \theta \in \Theta, a \in A}} \eta_{r,k,\theta,a} \varphi^{r,k}_{\theta}(a).
	\end{align*}
	
	We observe that Slater's condition holds for Problem~\ref{lp:multiBayes}. This holds since all constraints are linear and there exists a feasible solution.
	This is easily seen as there exists a set of feasible menu of IC marginal signaling schemes. Moreover, given a set of menus and a vector of types, it is possible to design consistent signaling schemes by taking the product distribution of the marginal signaling schemes relative to the types. Therefore, by strong duality, the optimal primal and dual variables must satisfy the KKT conditions. In particular it must hold that $\boldsymbol{0}\in\partial_{\phi_{\tau,\theta}(\avec_R)}(L)$ for each $\tau \in [t]$, $\theta\in \Theta$, $R \subseteq \rec$. Formally, for each $\tau \in [t]$,  $\theta \in \Theta$, and $R\subseteq \rec$, we have:
	\begin{align}\label{eq:slat1}
		\partial_{\phi_{\tau,\theta}(\avec_R)}(L)=  \mu_\theta  \fs_\theta(R)+ \sum_{ r \in R }  x_{r,\theta,\tau}  + y_{\tau,\theta}+\gamma_{\theta,R}=0.
	\end{align}
	Moreover, it must also hold that $\boldsymbol{0}\in\partial_{\varphi^{r,k}_\theta(a)}(L)$. Formally, for each $r \in \rec$, $k \in \K_r$, $\theta \in \Theta$, and $a \in A$ it holds
	\[
	 - \frac{\varphi^{r,k}_{\theta}(a)}{\alpha} -  \mathbb{I}_{a=a_1} \sum_{\substack{\tau \in [t]:\\{k}=k_{\tau,r}}} x_{r,\theta,\tau}+ \left(\sum\limits_{\substack{k^\prime\in\K_r}}z_{r,k,k'}\right) \mu_\theta u^r_k(a,\theta)   - \sum_{\substack{a'\in A,\\k'\in \K_r}}  \alpha_{r,k',k,a,a'} \mu_\theta \ur[r][k'](a',\theta) +\beta_{r,k,\theta}+\eta_{r,k,\theta,a}=0, 
	\]
	which implies that for each $r \in \rec$, $k \in \K_r$, $\theta \in \Theta$, and $a \in A$:
	\begin{equation}\label{eq:slat2}
		\frac{\varphi^{r,k}_\theta(a)}{\alpha}= -  \mathbb{I}_{a=a_1} \sum_{\substack{\tau \in [t]:\\{k}=k_{\tau,r}}} x_{r,\theta,\tau}+ \left(\sum\limits_{\substack{ k^\prime\in\K_r}}z_{r,k,k'}\right) \mu_\theta u^r_k(a,\theta)   - \sum_{\substack{a'\in A,\\k'\in \K_r}}  \alpha_{r,k',k,a,a'} \mu_\theta \ur[r][k'](a',\theta) +\beta_{r,k,\theta}+\eta_{r,k,\theta,a}.
	\end{equation}
	Similarly, it must hold that $\boldsymbol{0}\in \partial_{l^{r,k,k'}_a}(L)$. Formally, for each $r \in \rec$, $k,k' \in \K_r$, and $a \in A$, it holds
	\begin{equation}\label{eq:slat3}
		\partial_{l^{r,k,k'}_a}(L) = - z_{r,k,k'} +\sum_{a' \in A} \alpha_{r,k,k',a,a'} =0 
	\end{equation}
	Finally, plugging Equation~\eqref{eq:slat1}, Equation~\eqref{eq:slat2} and Equation~\eqref{eq:slat3} back into the Lagrangian we get:
	\begin{align*}\label{eq:slat4}
		L(\phi,\varphi,x,y,z,\alpha,\beta,\gamma,\eta)&
		=  \frac{1}{2\alpha}\sum_{ \substack{r \in \rec,k \in \K_r,\\\theta \in \Theta, a \in A}} \varphi^{r,k}_\theta(a)^2-\sum_{\tau \in [t], \theta \in \Theta} y_{\tau,\theta} - \sum_{r \in \rec, k \in \K_r, \theta \in \Theta} \beta_{r,k,\theta}.
	\end{align*}
	Finally the dual problem of Problem~\ref{lp:multiBayes} can be written as follows:
	\begin{subequations}\label{prob:dualEfficientMulti}
		\begin{align}
			\min_{\varphi, x,\beta\le 0} &\left\{\frac{1}{2\alpha}\sum_{ \substack{r \in \rec,k \in \K_r,\\\theta \in \Theta, a \in A}} \varphi^{r,k}_\theta(a)^2-\sum_{\tau \in [t], \theta \in \Theta} y_{\tau,\theta} - \sum_{r \in \rec, k \in \K_r, \theta \in \Theta} \beta_{r,k,\theta}\right\}		\quad\textnormal{ s.t. } \\
			&  \mu_\theta  \fs_\theta(R)+ \sum_{ r \in R }  x_{r,\theta,\tau}  + y_{\tau,\theta} \le 0,\quad \forall \tau \in [t],  \theta \in \Theta,  R \subseteq \rec \label{eq:neg} \\
			&\frac{\varphi^{r,k}_\theta(a)}{\alpha} \le-  \mathbb{I}_{a=a_1} \hspace{-0.2cm}\sum_{\substack{\tau \in [t]:\\{k}=k_{\tau,r}}} x_{r,\theta,\tau}+ \left(\sum\limits_{k'\in\K_r}z_{r,k,k'}\right) \mu_\theta u^r_k(a,\theta)   -\hspace{-0.2cm} \sum_{\substack{a'\in A,\\k'\in \K_r}}  \alpha_{r,k',k,a,a'} \mu_\theta \ur[r][k'](a',\theta)+\beta_{r,k,\theta,a},\nonumber\\
			&	\hspace{8cm}\forall r \in \rec, k \in \K_r, \theta \in \Theta, a \in A \label{eq:neg2}\\
			&- z_{r,k,k'} +\sum_{a' \in A} \alpha_{r,k,k',a,a'} =0,\quad \forall r\in \rec,  k,k' \in \K_r, \forall a \in A
		\end{align}
	\end{subequations} 
	where the constraint of Equation~\eqref{eq:slat1} becomes the constraint of Equation~\eqref{eq:neg} since the dual variable $\gamma$ is positive. 
	Similarly, the constraint of Equation~\eqref{eq:slat2} becomes the constraint of Equation~\eqref{eq:neg2} as the dual variable $\eta$ is positive.
	
	We now remark that the above dual problem can be solve in polynomial time, when we have access to the optimization oracle $\cO$.
	
	Problem~\ref{prob:dualEfficientMulti} is convex. Hence, we can solve it applying the ellipsoid method.
	The separation over Constraint~\eqref{eq:neg2} can be done in polynomial-time since there are polynomially-many constraints. Moreover, the separation problem relative to the objective can be solved in polynomial time since there are polynomially-many variables and the objective is convex. 
	Finally, the separation over the constraint of Equation~\eqref{eq:neg} must solve 
	\[
	\arg \max_{R}  \left\{ \mu_\theta  \fs_\theta(R) +\sum_{ r \in \rec }  x_{r,\theta,\tau}\right\},
	\]
	for each possible $\tau \in [t]$ and $\theta \in \Theta$, which can be done by exploiting the optimization oracle $\cO(\fs_{\theta}, x_{\theta,\tau}/\mu_\theta)$ for all $\tau\in[t]$ and $\theta\in\Theta$.
	
	If any of these solution are greater than $-y_{\tau,\theta}$, we return the relative constraint, otherwise all the constraints~\eqref{eq:neg2} are satisfied.
	Hence, the ellipsoid method runs in polynomial-time and find an arbitrary good approximation. 
	For the easy of exposition, we ignore the arbitrary small approximation error of the ellipsoid method.
\end{proof}

\corollaryOracle*

\begin{proof}
	Since, by Lemma~\ref{lem:dughmi}, there exists a polynomial-time oracle $\mathcal{O}$, applying Lemma~\ref{lm:oracle1} and~\ref{lem:oraclequadratic} we can compute Line~4 and~5 of Algorithm~\ref{alg:typeReporting} in polynomial-time. 
	Moreover, it is easy to see that all the other operations of the algorithm can be executed in polynomial time.
\end{proof}

\end{document}